\begin{document}


\RUNAUTHOR{Yang, Lee, and Chen}

\RUNTITLE{Competitive Demand Learning: A Non-cooperative Pricing Algorithm with Coordinated Price Experimentation}

\TITLE{Competitive Demand Learning: A Non-cooperative Pricing Algorithm with Coordinated Price Experimentation}


\ARTICLEAUTHORS{%
\AUTHOR{Yongge Yang}
\AFF{Department of Industrial Engineering and Engineering Management, National Tsing Hua University, Taiwan. \EMAIL{s106034466@m106.nthu.edu.tw}, \URL{}}
\AUTHOR{Yu-Ching Lee}
\AFF{Corresponding. Department of Industrial Engineering and Engineering Management, National Tsing Hua University, Taiwan. \EMAIL{yclee@ie.nthu.edu.tw}, \URL{}}
\AUTHOR{Po-An Chen}
\AFF{Co-Corresponding. Institute of Information Management, National Yang Ming Chiao Tung University, Taiwan. \EMAIL{poanchen@nycu.edu.tw}, \URL{}}
}
\ABSTRACT{%
We consider a periodical equilibrium  pricing  problem  for multiple  firms  over  a  planning  horizon  of $T$ periods.  At each period, firms set their selling prices and receive stochastic demand from consumers. Firms do not know their underlying demand curve, but they wish to determine the selling prices to maximize total revenue under competition. Hence, they have to do some price experiments such that the observed demand data are informative to make price decisions. However, uncoordinated price updating can render the demand information gathered by price experimentation less informative or inaccurate. We design a nonparametric learning algorithm to facilitate coordinated dynamic pricing, in which competitive firms estimate their demand functions based on observations and adjust their pricing strategies in a prescribed manner. We show that the pricing decisions, determined by estimated demand functions, converge to underlying equilibrium as time progresses. {We obtain a bound of the revenue difference that has an order of $\mathcal{O}(F^2T^{3/4})$ and a regret bound that has an order of $\mathcal{O}(F\sqrt{T})$ with respect to the number of the competitive firms~$F$ and $T$.} We also develop a modified algorithm to handle the situation where some firms may have the knowledge of the demand curve.
}%

\KEYWORDS{revenue management; dynamic pricing; 
noncooperative competition} 
\HISTORY{Received: February 15 2022; accepted: July 2023 by Geoffrey Parker after three revisions.}

\maketitle

%


\section{Introduction}\label{introduction}
Nowadays, online platforms/marketplaces become more centralized and offer possibly various algorithmic services to firms on them. \emph{Demand learning} can be seen as one of these services.
The demand learning algorithm is used to maximize a firm's revenue over a finite-period planning horizon, given that a firm may not know the underlying demand curve \textit{a priori}. In particular, in emerging markets, where the demand information is scarce, firms constantly adjust their pricing to collect adequate demand information. This process has been termed \textit{price experimentation}. (See \cite{broder2012dynamic} and \cite{keskin2014dynamic}.) Currently, few papers exist that consider competitive environments with unknown demand curves because such conditions add greater complexity to the analysis of the decision process. 

Evaluating the influence of decisions made by competitors is challenging. If firms are sensitive to competitors' prices, one might expect that a change in the price set by a firm would provoke an immediate response from the other firms. This situation often arises in online platform, where the selling prices are easily observed. For example, suppose that Firm A chooses to lower its price after observing a price reduction by Firm B. In such a case, neither Firm A nor Firm B is able to calculate how much of the market response is due to its price reduction and how much is the result of the competitor's drop in price as it is almost impossible for each firm to remain continuously up-to-date regarding current prices set by competitors. Pure consumer response data is difficult to obtain even if competitors' instantaneous pricing decisions remain unchanged. Such \textit{uncoordinated} price updating renders the demand information gathered by price experimentation neither informative nor accurate, and thus the demand learning procedure becomes useless. Therefore, it results in a real difficulty for firms to make better pricing decisions to extract more revenue when the competition effect is significant. 
To overcome this, firms need to adjust their prices separately over time so they can distinguish between the market response from their own pricing decisions and those from their competitors. 
The platform also wants the market to be stable. A mechanism of \emph{coordinated dynamic pricing} thus arises on the platform. Such a mechanism ensures that the pricing strategy of each firm is adjusted in a prescribed way, and the collected information is valid to learn the underlying demand curve and to make pricing decisions.

Consider a total of $F$ firms in an oligopolistic market, in which the true underlying demand curve and the presence of demand shocks are unknown. In this study, we consider the underlying demand curve for each firm as being stationary over a time horizon of $T$ periods. The firms make 
pricing decisions in each period $t=1,\cdots,T$. Firm~$i$'s decision at period $t$ is denoted by $p_t^i$, and the demand for firm~$i$'s product in period~$t$ is denoted by $D^i_t(\mathbf{p}_t)=\lambda^i(\mathbf{p}_t)+\varepsilon_t^i$, where $\mathbf{p}_t\equiv [p_t^1,\cdots,p_t^F]$ represents the pricing decisions of all firms. 
Without loss of generality, we assume that the mean value of $D_t^i$ is given by the average rate $\lambda^i$, conditional on the price vector of all companies, while $\varepsilon^i_t$ measures the demand shock in period $t$. Note that $\lambda^i(\mathbf{p})$ reflects the fact that firm $i$ acknowledges that the price decisions made by other firms will influence the demand for the product of firm $i$. By focusing on competition among the $F$ firms, we do not consider capacity limitation, production cost or marginal cost. 
We consider that the competition between firms as a noncooperative game, where firms are unable to collude. The goal of each firm is to sequentially set a price to maximize its own revenue under demand uncertainty and competition, and the pricing decisions can achieve a Nash equilibrium. Note that the price set by a firm is the only information that its competitors can observe. Because the sales quantity is usually private, firms are unable to estimate the demands of the competitors in our setting. To achieve the equilibrium pricing, firms can only submit the estimated demand function to an online platform/marketplace where this information can further be used to produce and publish the equilibrium prices.


This paper designs an algorithm of coordinated dynamic pricing in which firms in competition estimate their demand functions based on observations and adjust their pricing strategies in a prescribed manner to maximize their revenues. 
The goal of the process is for the pricing decisions, determined by the estimated demand functions, to converge to underlying equilibrium decisions. The main question that researchers aim to answer is whether such a mechanism of periodically coordinated price updating is approximately optimal for all firms. Specifically, we ask whether the mechanism may allow prices to reach a stable state and how much regret firms incur by employing such a dynamic equilibrium pricing algorithm. 


\subsection{Our Results and Main Contributions}\label{ourresults}
We propose a competitive demand learning (CDL) algorithm to solve the dynamic pricing decisions of each firm in competition. As the true demand curve is unknown, each firm estimates a demand curve via a linear approximation. Firms submit their estimated demands to the platform where an equilibrium pricing which will be later used to collect new demand information can be produced. The process is termed \emph{learning and earning} in the literature. Throughout this paper, we use period~$t$ or, equivalently, time~$t$. Let $p^{i}_t$ denote the price of firm $i$ at time $t$, and let $p^{-i}_t$ denote the prices of the other competitors. A firm is called a \emph{clairvoyant} firm if the firm has knowledge of the underlying demand curve and the distribution of demand shocks. If all firms are  clairvoyant, they will set price at the clairvoyant Nash equilibrium $\mathbf{p}^\ast$, where $\mathbf{p}^\ast$ denotes the vector consisting of all firms' prices. The goal of CDL is to make $p^i_t$ converge to the clairvoyant equilibrium price of firm $i$, $p^{i\ast}$, as $t$ grows large. The learning process is called \textit{complete} learning if the collected demand information is enough to make the estimated parameters converge to the true values; otherwise, it is termed \textit{incomplete} learning.
In a competitive environment, it is hard for a firm to achieve complete learning alone due to the influence of strategies employed by competitors. 

We summarize our contributions in the following.
\begin{itemize}
    \item \textbf{Convergence of prices to the Nash equilibrium.}  We establish the convergence results in Section~\ref{analysis}. 
   We derive the convergence of prices from the property of a contraction mapping and show the uniqueness with high probability of the price generated by the algorithm in each period. Our key result (Theorem~\ref{theorem1}) states that the best response function derived by CDL with a quadratic concave revenue function\footnote{A multiplication of a linear approximation of demand and price results in a quadratic concave function in price.} will generate the sequence $ \lbrace \mathbf {p}_t \rbrace $ that converges to $ \mathbf {p}^\ast $ as $t$ grows large. Hence, CDL ensures all firms in the market who coordinately adjust their prices achieve complete learning.  Moreover, in an alternative scenario where some firms are clairvoyant and thus unwilling to participate in price experimentation, we propose a modified CDL algorithm to account for this. In the modified CDL algorithm, using a linear approximation of the demand function that does not estimate the effects of firms with known demand is similar to that presented in \cite{cooper2015learning}. Our result (Theorem~\ref{theorem4}) shows that the sequence $\lbrace \mathbf{p}_t \rbrace$ generated by the modified CDL algorithm can still converge to $\mathbf{p}^\ast$. 
    \item \textbf{Regret bounds in a competitive environment.} In our paper, a firm's regret is described by the difference between the realized revenue and optimal expected revenue, given that other competitors adhere to the pricing decisions determined by the pricing algorithm where the optimal
expected revenue may be obtained when the firm has full demand information. We derive an $\mathcal{O}(F\sqrt{T})$ regret upper bound for the CDL algorithm (Theorem~\ref{theorem3}). In our setting, the definition of regret is distinct from that which is commonly used in a monopoly setting in which a firm plays against the market demand \emph{without} competitors. The clairvoyant problem in a monopoly setting is defined for a firm with full information about the underlying demand curve to maximize the total revenue over the planning horizon (see \cite{keskin2014dynamic}). Our definition of regret is also different from what is used in a cooperative pricing setting in which the clairvoyant problem is defined for each firm with full information about the underlying demand curve to cooperatively decide each firm's optimal price, leading to the highest total revenue (see \cite{meylahn2022learning}). In addition, we analyze the revenue difference obtained by the algorithm from the revenue obtained by the clairvoyant Nash equilibrium $\mathbf{p}^\ast$, and show that CDL ensures a bound of $\mathcal{O}(F^2 T^{3/4})$ on the revenue difference (Theorem~\ref{theorem2}).  
\end{itemize}

\subsection{Related Work}\label{relatedwork}
To conclude this section, we briefly review the literature that is most related to our work.
\begin{itemize}
\item \textbf{Learning algorithms for pricing models in a monopoly market.} Dynamic pricing is a critical tool for revenue management. The problem of unknown demand functions has been widely explored in the literature. One common theme of the existing work focuses on the task firms face to arrive at optimal pricing decisions when the underlying demand curve is unknown and how to lower the growth rate of regret.  The structure relies on price experimentation to infer accurate information about the demand function. Since price experimentation can be costly, some papers focus on a balance between the exploration-exploitation trade-off. \cite{broder2012dynamic} used maximum-likelihood estimation to build the pricing policy and \cite{den2014simultaneously} proposed a policy of gradually reducing the taboo interval to achieve asymptotically optimal pricing. The two policies that they proposed can be regarded as iterated least squares. \cite{keskin2014dynamic} provided a generally sufficient condition for iterated least squares methods which ensures that regret does not exceed the order of $\sqrt{T}$. \cite{ban2021personalized}  formulated the feature-based demand model and designed a dynamic pricing policy where the regret is related to the dimension of the feature vector. 
A common characteristic of previous work is that the underlying demand model is assumed to be the same as the parametric demand function and is thus well-specified. \cite{besbes2015surprising} proposed a pricing policy constructed on the assumption that the underlying demand curve is non-linear but used a linear approximation to model the market response to the offered price. We refer readers to \cite{chen2019coordinating} which extend the model to coordinate pricing and an inventory control problem.

\item \textbf{Dynamic pricing in a competitive environment.}  Few papers consider dynamic pricing with competition, as this type of problem requires a game-theoretic approach to obtain a solution. Some studies consider non-cooperative competition in revenue management, where the products are differentiated, for example, at the service level or based on certain attributes of the goods on which consumer choice closely depends. \cite{levin2009dynamic} presented a dynamic pricing model in which strategic consumers choose differentiated perishable goods. With firms able to benefit from a lack of complete information among consumers, they provide an analysis of the equilibrium price dynamic under different market settings. \cite{gallego2014dynamic} also studied how consumer choice depends not only on price, but also on purchase time and product attributes; they showed that the shadow price solved by the deterministic problem can be used to construct an asymptotic equilibrium pricing policy. When different companies sell identical goods, the market response depends mainly on the price. \cite{cooper2015learning} used a ``flawed" demand function without incorporating competitor prices and showed that (a) the prices converge to the Nash equilibrium if the slope is known to the seller, (b) the prices converge to the cooperative prices if the intercept is known to the seller, and (c) there are many potential limit prices that are neither Nash equilibrium nor cooperative prices if the parameters of the demand function are unknown.

\item \textbf{Learning algorithms for pricing models in a competitive environment.} 
If the content of learning is incorporated into the pricing policy, firms are able to learn the response of consumers and competitors, continuously adjusting pricing to achieve optimality over time.
\cite{bertsimas2006dynamic} considered a myopic pricing policy in a competitive oligopolistic environment. In such a case, a given firm must estimate not only its own demand but also the demand and pricing of competitors. A method of mathematical programming with equilibrium constraints is applied to formulate the model to estimate the competitor parameters.  \cite{kachani2007modeling} proposed further improvements that adjusted the most accurate parameters of demand function periodically based on the equilibrium demands. \cite{gallego2012demand} considered a firm selling a single product with multiple versions and developed a multinomial logit choice demand model, commonly used in the context of non-cooperative competition between firms selling differentiated goods, each seeking to maximize total revenue. Bayesian updating is used to estimate the unknown arrival rate, and maximum likelihood estimation is used to update the core value. They showed that the unknown parameters can be estimated simultaneously while sales are changing. \cite{kwon2009non} used the observed sales data to estimate the parameters of demand by Kalman filtering, showing that a differential variational inequality can be used to model the non-cooperative competition.
\cite{fisher2018competition} conducted two field experiments with randomized prices. The first experiment estimated a consumer choice model in which observations of sales made by competitors are not required, and the second tested a best response pricing strategy. 

More recently,  \cite{meylahn2022learning} considered a pricing problem in a  duopoly  and  proposed a learning algorithm based on which firms learn to collude instead of compete. They showed that prices will converge to maximize the earnings of the firm if it is profitable for the colluding firms, in which case, it is joint revenue; otherwise, the convergence will reflect a Nash equilibrium.  \cite{golrezaei2020no} studied a pricing problem under reference effects in a duopoly, considering the reference effects as a third firm and applying an online mirror descent algorithm to determine the optimal pricing. Notably, the authors assume that the firm has full information about the gradient of the revenue function, but in our problem, the demand function needs to be learned.
\end{itemize}

\section{Model and Preliminaries}\label{model}
We consider a periodical equilibrium pricing problem for $F$ firms over a planning horizon of $T$ periods. The products offered by firms are substitutes. In each period $t=1,\cdots,T$, each firm needs to set prices $p_t^i$, chosen from a feasible and bounded policy set $\mathcal{P}^i=\left[p^{i,\ell},p^{i,h}\right]$, $p^{i,\ell}<p^{i,h}$, $\forall i=1, \cdots, F$. The prices set by firms affect the market response of all firms in competition. Recall that $\mathbf{p}\equiv (p^i,p^{-i})$ denotes the vector of prices of all firms in the competition. The market response to the price $p_t^i$ for firm $i$ at time $t$ (which is exactly the demand function) is given by $D_t^i(\mathbf{p}_t)=\lambda^i(\mathbf{p}_t) +\varepsilon_t^i,\; \forall i =1,\cdots,F,$
in which $\lambda^i(\mathbf{p}_t)$ is a deterministic twice differentiable function representing the mean demand during a period for every period, conditional on the price $\mathbf{p}_t$, and $\varepsilon_t^i$ are zero-mean random variables, assumed to be independent and identically distributed. Hence, the underlying demand curve $\lambda^i(\mathbf{p})$ of firm $i$ not only depends on the price $p^i$, chosen by itself, but also on the prices of other firms $p^{-i}$, where $p^{-i}=\left\lbrace p^1,\cdots,p^{i-1},p^{i+1},\cdots,p^F \right\rbrace$. The revenue function $r^i$ of firm $i$ obtained from prices $\mathbf{p}$ is denoted by $r^i(\mathbf{p}) \equiv p^i \mathbb{E}[D^i(\mathbf{p})]$. Each firm seeks to maximize its revenue in a competitive environment.  
Thus, firms want their pricing decisions achieve a pure Nash equilibrium (NE). We make the following assumptions, which are necessary for the results obtained in this paper, about the mean demand functions and revenue functions:
\begin{assumption} \label{assumtion1}
(i) For any $p^i\in\mathcal{P}^i$, $\displaystyle{\frac{\partial \lambda^i(\cdot,p^{-i})}{\partial p^i}<0,\forall i=1,\cdots,F}$.\\
(ii) For any $p^i\in\mathcal{P}^i$, $\displaystyle{\frac{\partial \lambda^i(p^i,p^{-i \setminus j},\cdot)}{\partial p^j}>0,\forall j \neq i,i=1,\cdots,F}$.\\
(iii) For every $r^i(\mathbf{p})$, $$\sum_{j\neq i}^F \left\vert \frac{\partial^2 r^i}{\partial p^i \partial p^j}\right\vert < \left\vert \frac{\partial^2 r^i}{\partial p^{i2} }\right\vert, \forall p^i \in \mathcal{P}^i, p^j \in \mathcal{P}^j.$$\\
(iv)  For every firm $i$, there exists a constant $s_0$ such that, for all $\vert s\vert< s_0$, $\mathbb{E}\left[\exp\left\lbrace s\varepsilon_1^i\right\rbrace\right]<\infty$, and the variance of each firm's $\varepsilon^i$ is equal to $\sigma^2$.\\
(v) For every firm $i$, given $p^{-i}$, firm $i$ chooses to price at a price $p^i\in\mathcal{P}^i$ satisfying $\mathbb{E}\left[D^i(p^i,p^{-i})\right]\geq 0$.
\end{assumption}

Assumption~\ref{assumtion1}. (i) ensures that for every firm $i$, the underlying demand function $\lambda^i (\cdot,p^{-i})$ is strictly decreasing on $p^i$ given the prices set by other firms, $p^{-i}$. Assumption~\ref{assumtion1}. (ii) dictates that $\lambda^i(p^i,p^{-i \setminus j},\cdot)$ is strictly increasing on $p^j$ with $p^i$ and $p^{-i \setminus j}$ given, in which $p^{-i \setminus j}$ represents the vector constituted by all prices except $p^i$ and $p^j$. 
 Assumption~\ref{assumtion1}. (iii) is termed as the ``diagonal dominance'' condition.  Assumption~\ref{assumtion1}. (iv) ensures that the demand shock $\varepsilon^i_t$ of each firm has a light-tailed distribution and the homogeneity of variance.\footnote{The assumption of homogeneity of variance is to ease our analysis. Note that the non-homogeneity of variances would make the problem more realistic. This setting would lead to a slight change in the analysis. The constants of the upper bound of the results in all the propositions and theorems are related to the variance. Thus, with the non-homogeneity of variances, these constants should be chosen according to the largest variance of the demand shocks among these firms.}

We give some examples of demand functions from \cite{bernstein2004dynamic} and \cite{chen2011group} that satisfy Assumption~\ref{assumtion1}. (i) and (ii). 
\begin{example}
Linear demand: $\lambda^i(\mathbf{p})=\alpha^i-\beta^{ii}p^i+\sum\limits_{j=1,j\neq i}^F\beta^{ij}p^j$, $\alpha^i>0$, $\beta^{ii}>0$, {$\beta^{ij}>0$}.
\end{example}
\begin{example}
Multinomial logit demand: $\lambda^i(\mathbf{p})=\frac{\exp^{\alpha^i-\beta^ip^i}}{1+\sum\limits_{i=1}^F \exp^{\alpha^i-\beta^ip^i}}$, $\alpha^i>0,\beta^i>0$ and $\alpha^i-\beta^ip^i<0$ for $p^i \in \mathcal{P}^i$.
\end{example}
\begin{example}
Exponential demand: 
$\lambda^i(\mathbf{p})=\exp^{\alpha^i-\beta^{ii}p^i+\sum\limits_{j=1,j\neq i}^F\beta^{ij}p^j}$, $\alpha^i>0$, $\beta^{ii}>0$, {$\beta^{ij}>0$}.
\end{example}
\begin{example}
Semi-log demand: $\lambda^i(\mathbf{p})=\alpha^i-\beta^{ii}\log p^i+\sum\limits_{j=1,j\neq i}^F\beta^{ij}\log p^j$, $\alpha^i>0$, $\beta^{ii}>0$, {$\beta^{ij}>0$}.
\end{example}
The above parameter restrictions ensure the existence of equilibrium, and along with Assumption~1. (iii), i.e., the diagonal dominance condition, furthermore guarantees the uniqueness of equilibrium, e.g., $\beta^{ii}>\sum_{j,j\neq i}^F\beta^{ij}$ for the linear demand model (See \cite{milgrom1990rationalizability}). Assumption~1. (iii) can be assumed equivalently as $\sum\limits_{j=1}^F\frac{\partial \lambda^i(\mathbf{p})}{\partial p^j} <0$ for all $i=1,\cdots,F$. More discussions are seen in \cite{corchon1996stability,bernstein2004general}, and \cite{tuinstra2004price}.
\footnote{Note that the diagonal dominance condition ensures that the Hessian matrix of the revenue function is negative definite, but the negative definiteness of the Hessian matrix of the revenue function does not ensure the diagonal dominance.}


For every firm $i$, if the underlying demand curve $\lambda^i(\cdot)$ and the distribution of the demand shock $\varepsilon^i$ are known \textit{a priori}, 
and given the competitors' pricing at $p_t^{-i}$ for each period $t$, firm $i$ wishes to solve the following optimization problem
\begin{equation}
  \begin{array}{ccl}
 &\mbox{maximize}_{p^i} & \sum\limits_{t=1}^Tr^i(p^i_t,p_t^{-i})\\[6pt]
    &\mbox{subject to}&  p^i_t \in \mathcal{P}^i, \forall i=1,\cdots,F,
  \end{array} \label{singleagent}
\end{equation}
where $r^i(p^i_t,p_t^{-i})\equiv  p^i \mathbb{E}\left[D^i_t(p^i,p_t^{-i})\right]$. The optimal pricing at each period could be solved through the best response function, that is
$$z^i(p^i,p_t^{-i})=\argmax_{p^i\in \mathcal{P}^i} r^i(p^i,p_t^{-i}).$$
A feasible vector $\mathbf{p}^\ast$ is a solution to the problem of NE if, for all firms $i=1,\cdots,F$, we have
$$\displaystyle{r^i(p^{i\ast},p^{-i\ast}) \geq r^i(p^{i},p^{-i\ast}) \quad \forall p^i \in \mathcal{P}^i \geq 0}.$$
Suppose all firms have knowledge of their own underlying demand curve  $\lambda^i(\cdot)$ and distribution of the demand shock $\varepsilon^i$, then all firms wish to solve (\ref{singleagent}). In this case, the problem becomes a classical noncooperative game. 
It is well known that if there exists a vector $\mathbf{p}^\ast$ and multipliers that satisfy the following centralized KKT system,

\begin{equation}
\begin{array}{cl}
\begin{array}{cc}
\lambda^i(\mathbf{p})+p^i \nabla_{p^i} \lambda^i(\mathbf{p})+\mathbb{E}(\varepsilon^i)
\end{array}
-\mu^{i,1}+\mu^{i,2}=0 & \forall i\\[6pt]
\mu^{i,1}\geq 0, \mu^{i,1}\cdot (p^{i,h} -p^i)= 0,p^{i,h} -p^i\geq 0 &\forall i\\[6pt]
 \mu^{i,2}\geq 0, \mu^{i,2}\cdot(p^i-p^{i,l}) = 0, p^i-p^{i,l}\geq 0,&\forall i,
\end{array}\label{KKT}
\end{equation} then the vector $\mathbf{p}^\ast$ is a NE. The single-period equilibrium pricing problem is reformulated as the KKT system (\ref{KKT}). Note that the KKT system (\ref{KKT}) can also be replaced by projecting the solution of the Lagrange function into the feasible space.
Therefore, if the underlying demand curve and distribution of the demand shock are known in advance, the prices of all firms reach NE $\mathbf{p}^\ast$ at the beginning of the planning horizon, and the pricing decisions of all firms will not change at any period $t$, that is, $p^i_1=p^i_2=\cdots=p_T^i=p^{i\ast}$.
\subsection{Design Ideas of the Algorithm} Owing to a lack of information about the underlying demand curve, firms must observe the sales quantity to estimate the demand function. A crucial question is ``how to use the collected demand information to help firms learn their demands?'' First, firms need to distinguish the demand change caused by its own price adjustment and that caused by competitors' price adjustment. If a firm lowers its price while a competitor increase its price, the collected sales increment is insufficient to reflect the firm's demand-price relationship.\footnote{This relationship is often estimated as a ``gradient'' of the underlying demand curve. Unlike other traditional gradient-based approaches, we do \emph{not} assume that the gradient or other information of the objective function is available. Thus, it is a more challenging environment for the analysis.} Therefore, our proposed algorithm requires firms to separately adjust their pricing decisions in a coordinated manner. The intuition behind this approach is described in Lemma~\ref{lemma1}.

A pricing algorithm should create suitable price dispersion to guarantee that the parameter estimates converge to the true values such that the prices eventually converge to the clairvoyant $\mathbf{p}^\ast$. A learning process relying on these phenomena is called \textit{complete learning} in literature. On the contrary, \textit{incomplete learning} describes a phenomenon in which the collected demand information is too scarce to be used to update the estimates, causing the firm to use a price that is not optimal. As pointed out by \cite{keskin2014dynamic}, which studied the dynamic pricing problem in the monopoly setting, using myopic policy will lead to incomplete learning and, thus, poor profit performance. Our proposed algorithm can help each firm achieve complete learning when there are competitors and minimize revenue loss by making good pricing decisions.

In practice, firms usually cannot observe the demand information of the competitors directly, and the online platform owns this information. Instead of the firm estimating the competitors’ demand functions, it could instead submit the estimates of their demand curves to the online platform. Then, the online platform performs the calculation and informs the participating firms of an equilibrium price based on all firms' estimates. The information-sharing structure of our paper is slightly different from that of  \cite{liu2021information}, which studied the information-sharing problem between the online platform and retailers. \cite{liu2021information} assumed that the online platform owns the information and makes information-sharing decisions rather than the sellers. In this case, the objective of the online platform is to maximize the sellers' total profits; however, the online platform charges a fixed percentage of the sellers' total profits as a commission fee. They found that an incentive exists for the online platform to share information with the sellers. In comparison, we aim to design an algorithm that can help firms without the knowledge of the underlying demand curve make pricing decisions with a sublinear regret, noting that we do not include a commission fee charged by the online platform.



\subsection{Algorithm}\label{algorithm}
We propose a CDL algorithm to solve the pricing decisions of each firm in competition. CDL operates in \textit{stages}, which we index by $n$ for $n=0,1,2,\cdots$. Every stage is equally separated as $F+1$ time intervals, which we index by $m$ for $m=1,2,\cdots,F+1$, and each interval contains $I_n$ periods (i.e., time steps). 
The length of intervals $I_n$ is exponentially increasing as $n$ increases. In the beginning of stage $n$, firm $i$ publishes a price $\hat{p}_n^i$, the vector of all prices is denoted as $\hat{\mathbf{p}}_n$. As there are $(F+1)I_n$ periods in one stage, the notation $t_{n}$ represents period $t$ at the beginning of stage $n$. In Step 1, firm $i$'s price $p_t^i=\hat{p}_n^i$ will not be changed in $F I_n$ but will be changed in one time interval if the index $m$ is equal to the firm index $i$. For example, firm 1 needs to set  $p_t^1=\hat{p}_n^1+\delta_n$ for $t=t_n+I_n+1,\cdots,t_n+2I_n$ and set $p_t^1=\hat{p}_n^1$ for the remaining periods of stage $n$.
Therefore, all firms are required to adjust their prices sequentially and collect sales information. Using the observations in stage $n$, the demand functions $\lambda^i (\cdot)$ of all firms during a period are then approximated by a simple linear function. We separate one stage into $F+1$ parts because each firm must estimate the $F+1$ parameters of the linear function. For index $i=1$, for example, firm $1$ can use the collected sales information from the interval $t_n+1,\cdots,t_n+I_n$ and the interval $t_n+I_n+1,\cdots,t_n+2I_n$ to estimate $\beta^{11}$ through the finite difference method.
At the end of stage $n$, all firms submit their approximation demand functions to the online platform,  and the online platform uses these to find the current NE~$\mathbf{\hat{p}}_{n+1}$.
Now, we present in detail the CDL algorithm looping $n$ from $0$ until a terminal stage, given as period $T$.
\begin{itemize}
\item Step 0. Preparation:
If $n=0$, input $I_0$, $v$, $t_0=0$ and $\hat{p}_1^i,\forall i=1,\cdots,F$. If $n>0$, set $I_n=\lfloor I_0v^n \rfloor$ and $\delta_n=I_n^{-\frac{1}{4}}$. 
\item  Step 1: Setting prices. The rule of firm $i$'s price $p_t^i$ at time $t$ is 
\begin{equation*}
    \begin{array}{ll}
     p_t^i=\hat{p}_n^i,    & \forall t=t_{n}+1,\cdots,t_{n}+iI_n,t_{n}+(i+1)I_n+1,\cdots,t_{n}+(F+1)I_n,  \\
      p_t^i=\hat{p}_n^i+\delta_n, & \forall t=t_{n}+i I_n+1,\cdots,t_{n}+(i+1)I_n.
    \end{array}
\end{equation*}
 Set $t_{n+1}=t_{n}+(F+1)I_n$.
\item Step 2. Estimating:
$$\displaystyle{(\hat{\alpha}^i_{n+1},\hat{\beta}^{ij}_{n+1})=\arg \min_{\alpha^i,\beta^{ij}} \left\lbrace \sum_{t=t_n+1}^{t_n+(F+1)I_n} \left[D^i_t-\bigg(\alpha^i-\beta^{ii} p_t^i +\sum_{j=1,j\neq i}^F \beta^{ij} p^j_t\bigg)\right]^2 \right\rbrace.}$$
\item Step 3. Computing the equilibrium:
We define the following optimization problem for firm $i$:
$$\displaystyle{\max_{p^i} r^i_{n+1} \equiv \max_{p^i} G_{n+1}\left\lbrace p^i,p^{-i},\hat{\alpha}^i_{n+1},\hat{\beta}^{ij}_{n+1} \right\rbrace},$$
where $G_{n+1}\left\lbrace p^i,p^{-i},\hat{\alpha}^i_{n+1},\hat{\beta}^i_{n+1} \right\rbrace $
$$\displaystyle{\equiv \left\{ p^i \left(\hat{\alpha}^i_{n+1}-\hat{\beta}^{ii}_{n+1}p^i+\sum\limits_{j=1,j\neq i}^F \hat{\beta}^{ij}_{n+1}p^{j}\right) \middle| \hat{\alpha}^i_{n+1}-\hat{\beta}^{ii}_{n+1}p^i+\sum\limits_{j=1,j\neq i}^F \hat{\beta}^{ij}_{n+1}p^{j} \geq 0, p^i \in \mathcal{P}^i\right\}.}$$
Proceeding to solve the system: 

\begin{equation}
\begin{array}{cl}
\begin{array}{cc}
\hat{\alpha}^i_{n+1}-2\hat{\beta}^{ii}_{n+1}p^i+\sum\limits_{j,j\neq i}^F \hat{\beta}^{ij}_{n+1}p^{j} 
\end{array}
-\mu^{i,1}+\mu^{i,2}=0 & \forall i,\\[6pt]
\mu^{i,1}\geq 0, \mu^{i,1}\cdot\left(p^i-p^{i,h}\right)=0,p^{i,h}-p^i \geq 0 &\forall i,\\[6pt]
\mu^{i,2}\geq 0, \mu^{i,2}\cdot\left(p^{i,l}-p^i\right) = 0, p^i-p^{i,l}\geq 0 &\forall i.
\end{array}\label{DDEPKKT}
\end{equation}
Then, prices for each firm $\hat{p}^i_{n+1}$ are set to the solution of this system. The uniqueness of the solution is discussed in Lemma~\ref{lemma2}. If there exist multiple solutions, pick any one. Set $n=n+1$ and \textbf{return to Step 0}.
\end{itemize}

During the CDL operation cycle in Step 1 to explore the sensitivity of the market to price changes, each firm must adjust the price by adding $\delta_n$. In Step 2, the underlying demand function of all firms is estimated using linear regression. Note that we only use the data gathered in this stage since past data may lead to inaccurate estimation of the demand function. If the underlying demand curve is a linear function, we can use all the cumulative data for the estimation due to the unchanging market sensitivity provided that price changes. However, as the underlying demand curve is not known \textit{a priori}, market sensitivity may keep changing as the prices change if the demand function is misspecified. 
In Step~3, the mechanism solves the KKT system  (\ref{DDEPKKT}), which includes the revenue optimization problem of all firms, providing a way to compute the equilibrium under the approximated demand. The solutions to system (\ref{DDEPKKT}) will be used for the next stage. Because the demand experienced by each firm is private data, firms are unable to estimate each other's demand levels, firms will submit their estimates to the online platform which computes the equilibrium prices for the next stage. In practice, an online platform notifies each firm in advance of the precise price adjustment at the start of each stage. Subsequently, each firm sets its prices in advance based on the notification, guaranteeing that prices are promptly updated in accordance with the rules of the price experiment. During the price experiment, each firm monitors its own demand and the prices of other firms, utilizing this data to estimate their respective demand functions (learning their demand functions) at the end of the stage. The online platform should offer guidance to firms regarding this mechanism, ensuring their comprehensive understanding of its functioning and significance. The platform can enforce compliance with this mechanism through contractual agreements and may impose penalties on firms that violate it. For instance, E-commerce platforms like Amazon are highly dynamic, enabling sellers to continually monitor the prevailing market situations and to adjust their prices accordingly in real time. These platforms can motivate participating sellers to adhere to this mechanism by offering incentives such as enhanced visibility or reduced transaction fees. Similarly, a hotel booking platform can involve all participating hotels in experiments and encourage them to update their room rates based on this mechanism. The platform can also modify the visibility of hotels on its platform in response to deviations from the prescribed updating rules.

\subsubsection*{Incentives}
A natural question that arises is why firms are willing to abide by the algorithm. The reason is that the prices set in Step 3 for every stage are the equilibrium prices at which firms have no incentives to deviate unilaterally. If a firm wants to change its price unilaterally, the price experiment (learning) fails, and such an action results in incomplete learning, which is undesirable.  In the \emph{monopoly} setting, complete learning ensures that the proportion of the firm’s expected lost revenue in each period grows sublinearly over time, whereas incomplete learning causes that proportion to grow linearly over time. In other words, a firm achieves complete learning such that the estimated parameters converge to the true values, and its price gradually converges to the optimal price, which maximizes its revenue. As for the \emph{competitive} setting, complete learning causes the estimated parameters to converge to the true values, and thereby under non-cooperative competition, the price of each firm converges to the equilibrium price.
\begin{figure}[!ht] \label{figure1}
		\centering
   \includegraphics[scale=1.8]{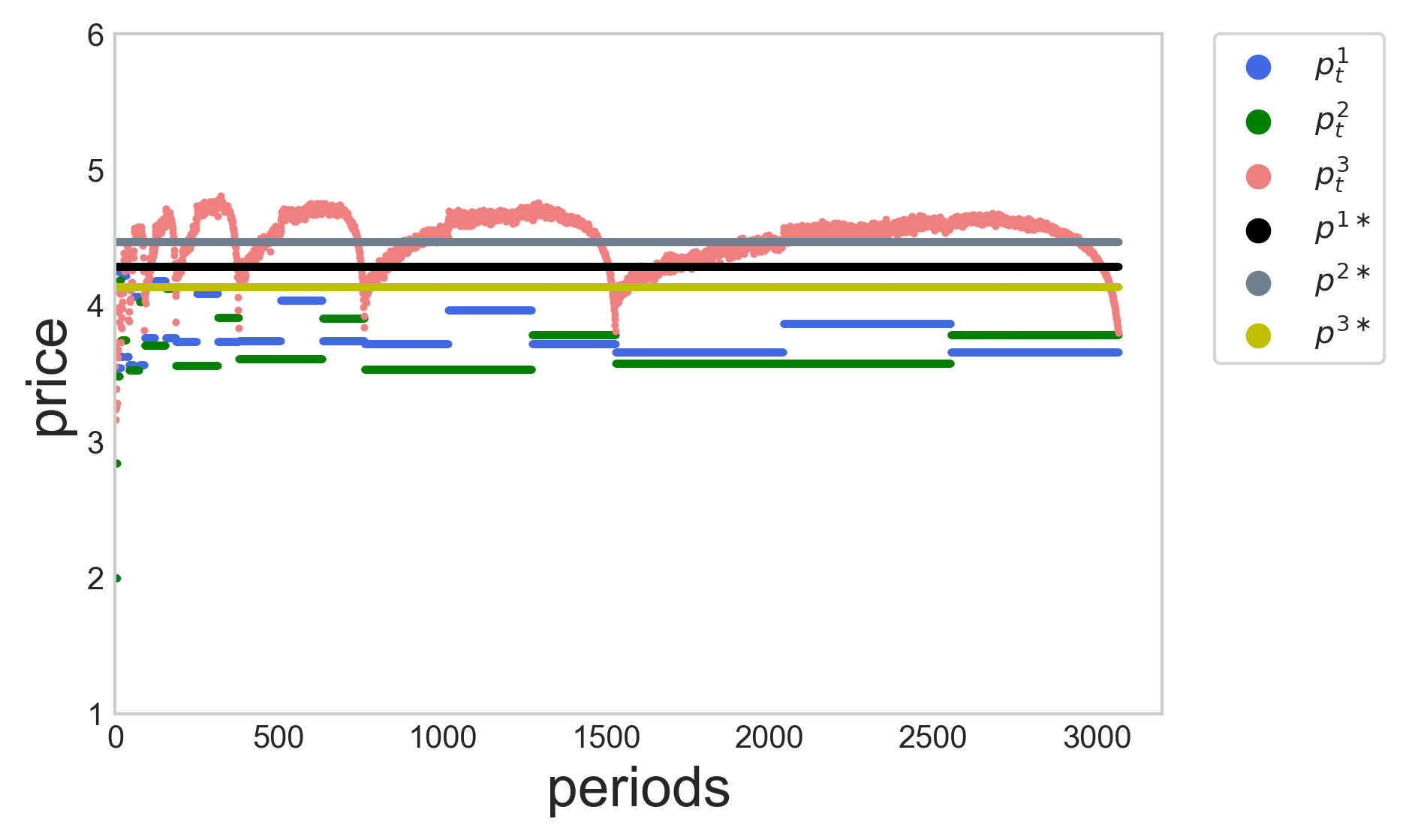}		
 	\caption{The prices sequences of firm~1, firm~2 and firm~3. }
\end{figure}

We put forward an exemplar numerical experiment to illustrate the importance of joining the price experiments and complete learning. As shown in Figure 1, the prices sequences $\left\lbrace p_t^1 \right\rbrace$ and $\left\lbrace p_t^2 \right\rbrace $ of firm~1 and firm~2 are generated by the CDL algorithm, respectively, and the prices sequence $\left\lbrace p_t^3 \right\rbrace$ of firm~3 is generated by the ``myopic" policy. According to the algorithm, firm~1 and firm~2 use a linear model $\alpha^i-\beta^{ii}p^i+\beta^{ij}p^j$, where $i$ and $j$ are $1$ and $2$, respectively, to estimate their demands. Firm~3 uses a myopic policy that combines greedy price update with sequential estimation as follows: at each period, firm~3 uses $\alpha^i-\beta^{ii}p^i+\sum_{j\neq i} \beta^{ij}p^j$ as the model to estimate firm 3's demand and then sets price to maximize its expected revenue, assuming that the unknown demand parameters are equal to the most recent estimates.

We randomly draw 100 instances for the underlying demand curves, where the parameters $\alpha^i$ are uniformly drawn from $[3,5]$, $\beta^{ii}$ are uniformly drawn from $[0.8,0.9]$ and $\beta^{ij}$ are uniformly drawn from $[0.3,0.35]$.
The standard deviation of demand shock is set at $0.16$. Each point in Figure~1 is computed by averaging $100$ repeated runs. We compute the averaged faction of revenue loss and averaged fraction of revenue difference  (which are formally defined in Section \ref{numericalexperiments}) for each firm. The averaged faction of revenue loss and revenue difference of firm 1 are around $12\%$ and $23.6\%$ respectively, of firm~2 are around $10.9\%$ and $25\%$ respectively, and of firm~3 are around $11\%$ and $31\%$ respectively. We can actually see that a firm who deviates from our proposed algorithm suffers revenue loss roughly around the others' and more revenue difference than the others' in our experimental settings.  It should be noted that the averaged fractions of revenue loss and revenue difference remain constant over time, indicating that both regret and revenue difference grow linearly over time. Moreover, it can be observed that the prices set by all firms are unable to converge to the true Nash equilibrium. Hence, there is an incentive for all firms to obey the algorithm since deviating from it will cause significant revenue loss.

\subsection{Performance Measures}\label{performance} 
We say a policy is \emph{admissible}  if it depends only on information that is available to the firm. Define the history available for firm $i$ at time $t$ as 
$\mathcal{H}_{t}^i=\left\lbrace\mathbf{p}_1,{D}^i_1(\mathbf{p}_1),\cdots,\mathbf{p}_{t-1},{D}^i_{t-1}(\mathbf{p}_{t-1})\right\rbrace.$  
The pricing decision of firm $i$ can be expressed as a policy function of the history, that is, 
$p_1^i=\pi^i(\emptyset)$ and $p_t^i=\pi^i(\mathcal{H}^i_{t})$ for all $t=2,\cdots,T$.
We denote by $\Pi=(\pi^1,\cdots,\pi^F)$ the admissible pricing policies of all firms.  
 Therefore, the pricing policy $\Pi$ assigns the prices of all firms to be used in the next period for each possible history of sales data.\footnote{It can be seen from the above definition of admissible pricing policies that the competitors' demands are not observable to the firm.}

For each feasible pricing policy $\Pi$ generated by the algorithm, we examine it in two ways. The first method is to evaluate whether $\hat{\mathbf{p}}_n$ will converge to $\mathbf{p}^\ast $; the second is to measure the impacts of $\hat{\mathbf{p}}_t$ on each firm $i$'s revenues. 
In the literature, this impact is quantified as \textit{regret}, and the cumulative regret over $T$ periods is defined as
\begin{equation}
    \displaystyle{\mathcal{R}^i(\pi^i,T)=\mathbb{E}\left[\sum\limits_{t=1}^T r^i(p_t^{i\ast},p_t^{-i})\right]-\mathbb{E}\left[\sum\limits_{t=1}^T r^i(p_t^i,p_t^{-i}) \right],}
    \label{regret}
\end{equation}
where $p_t^{i\ast}$ is computed by the best response function $z^i(p^i,p_t^{-i})$ if firm $i$ knows his underlying demand function, and $p_t^{-i}$ represents the prices of other firms at time $t$ determined by the algorithm. 
Note that $p_t^{-i}$ might be different at different time. Therefore, $p_t^{i\ast}$ will correspondingly change according to the best response function.
We are most concerned with whether the average regret per period, $\mathcal{R}^i(\pi,T)/T$ converges to $0$ with an acceptable convergence rate. 

The definition of regret in the monopoly setting is worst-case regret, which is the revenue of a single firm generated by a given pricing policy compared to the optimal revenue generated by the firm with the knowledge of the demand curve. However, this definition is not suitable for our setting since revenues obtained at clairvoyant NE $\mathbf{p}^\ast$ are not optimal. Alternative pricing vectors $\mathbf{p}$ that lead to higher revenues than those at $\mathbf{p}^\ast$ may exist when firms search for an equilibrium point (see also in \cite{cooper2015learning}). {What should be emphasized here is that this situation, in which price vectors are \emph{not} Nash equilibrium points and incentives remain for firms to change their prices, may happen when all firms engage in non-cooperative competition.}
From the perspective of the incentives for deviating from the prices set by the algorithm, a related topic is algorithmic collusion, which refers to the scenario in which firms choose to set prices that yield the highest total revenue globally. Hence, the benchmark of such a scenario is the revenue obtained by the ``optimal” price, which could be a global optimal price that maximizes the combined revenue of all firms, or an equilibrium price that maximizes each firm’s own revenue.  Thus, its regret is relative to a static optimal prices, while our definition of regret is relative to a ``dynamic" optimal price, which is the best response to the competitors’ prices. Another benchmark to consider is that, for each firm, there is an optimal price that can maximize its total revenue over the selling horizon, i.e., $p^{i\ast}_T=\argmax \sum_{t=1}^T r^i(p^{i\ast},p_t^i).$ The regret in this case is bounded by the regret as defined in our study because the revenue generated by $p_t^{i\ast}$ is the optimal for each period. 

Compared to the revenues obtained by the pricing policies $\Pi$, the revenues obtained by the clairvoyant Nash equilibrium may be higher or lower. 
We define the \textit{revenue difference} as 
\begin{equation}
    \mathcal{D}^i\left(\pi^i,T\right)=\Bigg\vert\mathbb{E}\left[\sum_{t=1}^T r^i(p^{i\ast},p^{-i\ast})\right]-\mathbb{E}\left[\sum_{t=1}^T r^i(p_t^i,p_t^{-i}) \right]\Bigg\vert
    =\Bigg\vert p^{i\ast}\lambda^i(\mathbf{p}^\ast)T-\mathbb{E}\left[\sum_{t=1}^T p_t^i D_t\right]\Bigg\vert \label{difference}
\end{equation}
to evaluate it. 
The revenue difference is positive in each period. As it is impossible to determine whether revenues obtained by the clairvoyant Nash equilibrium are less than those obtained by the pricing policies $\Pi$, the difference per period between the two is expressed as an absolute value no less than zero.

\subsection{Discussions: Our and Related Models and Algorithms}
\subsubsection*{Connections and Comparisons to Closely Related Models and Algorithms}
The structure of operation cycles has frequently been utilized prevalently in research on learning and pricing problems.  \cite{besbes2015surprising} proposed a  pricing policy to ascertain the demand function and determine the optimal pricing that corresponds to the approximate demand function.  CDL is a generalization of their study, in which there are no competitors ($F=1$), so the number of periods needed for each stage is reduced to equal to the number in their study. A later study by \cite{chen2019coordinating} considered joint pricing and inventory control with a backlog, assuming that the underlying demand model is multiplicative. \cite{golrezaei2020no} viewed the linear demand model as a first-order approximation to a complex demand model and assumed that a firm can access the gradient of the revenue function at the firm’s own price, which they termed  ``first order oracle", i.e., $\partial r^i(\mathbf{p})/\partial p^i$. In this setting, firms have no need to learn the demands and can make pricing decisions based on the first-order oracle. Nevertheless, we focus on the problem that arises in cases in which the firms have no prior information about the underlying demands, especially in an emerging market.

Some recent studies also deal with equilibrium pricing algorithms under problem settings similar to ours.   \cite{hansen2021frontiers} used an upper confidence bound (UCB) algorithm to help firms make pricing decisions that were not necessarily Nash equilibria, since each price (arm) of the UCB may yield a better payoff than that under the equilibrium. In this case, firms collude to realize the highest revenues.  Another collusion setting was discussed in \cite{meylahn2022learning} based on a two-phased price experiment covering a collusive and a competitive phase. In each of these phases, they computed two price vectors, the collusive price vector and the competitive price vector, with the firms using the price vector that yielded the highest profit for each firm.  Their definition of regret was the extent of the revenue loss between the revenue expected under generated pricing decision  $\mathbf{p}_t$ and the revenue realized under the global optimal price $\mathbf{p}^\ast$.

{\subsubsection*{Our Number of Periods for Competitive Learning: a Balance between Learning and Earning}
``How long should the firms learn?'' is a common question in this kind of problem. We define the \textit{earning phase} for a firm in the competitive environment as the periods when it sets price at an equilibrium using the currently available information from estimation and the \textit{learning phase} for a firm as the periods when it experiments its price deviating from the equilibrium price. 
In stage $n$ of the designed CDL algorithm, each firm spends $1$ interval (or $I_n$ periods) to learn; during the other $F$ intervals (or $F\times I_n$ periods), each firm earns while observing and collecting the demand information. The CDL algorithm requires one firm to set its price to $\hat{p}^i_n+\delta_n$ in its learning phase and other firms to maintain its price at $\hat{p}^i_n$, and firms that have gone through the learning phase keep the price at $\hat{p}^i_n$ for the remaining earning phase. Although each firm has to deviate from $\hat{p}^i_n$ to $\hat{p}^i_n+\delta_n$, the slight change is useful and is diminishing as $n$ increases. The purpose of this change is that these firms need to extract information about consumers' response given its selling price and the competitors' prices.

In the earning phase, each firm sees $\hat{p}^i_n$ as the current optimal decision that can induce the most revenue using the information collected so far. Each firm chooses to lose some revenue in exchange for more accurate demand information in the learning phase. The learning phase duration is still $1$ interval per stage in a competitive environment. The number of intervals spent on earning per stage is equal to the number of firms without knowledge of the underlying demand curve. During the finite number of periods $T$, the more competitors a firm has in the market, the less time this firm will spend on learning. 
The revenue losses occur in both the learning and the earning phases unless all of the firm's pricing decisions converge to the clairvoyant NE $\mathbf{p}^\ast$. 
Several other learning algorithms, such as \cite{besbes2015surprising}, \cite{chen2019coordinating}, and \cite{ban2021personalized}, are also designed to achieve the right balance between learning and earning, literally meaning that the time-averaged regrets are all shown to converge to $0$ when time progresses.

\section{Analyses: Convergence, Revenue Difference, and Regret}\label{analysis}
The goal of this paper is to investigate whether such a linear model remains suitable in a competitive setting. To this end, we start by analysing the limit point of $\{\hat{\mathbf{p}}_n\}$ given its existence. 
We divide the analyses into two parts: first, we argue by intuition that when the limit price exists and the noise is absent, no unilateral changes on limit price $\tilde{p}^i$ will maximize the revenue function for an individual player $i$; that is, limit price is exactly the clairvoyant NE. Second, we argue that this statement remains true when noise is present.

We formally show in Section~\ref{convergence} that the pricing decision of each firm will converge to the clairvoyant NE $\mathbf{p}^\ast$ in probability if all firms accept the coordinated control. We find that the average regret defined in (\ref{regret}) converges to zero and the convergence of difference defined in (\ref{difference}) is dependent on the number of competitors. We discuss the revenue difference bound and a regret bound in Section~\ref{differenceregret}.
To avoid notation confusion, we mention the notation $\hat{\mathbf{p}}_n$ denote the pricing decisions made by algorithm at the end of stage $n-1$, and it will be used in price experiments  at stage $n$; $\mathbf{p}_t$ denote the pricing decisions made by algorithm at time $t$.
\subsection{Limit Point Analysis}
\subsubsection*{Noiseless Case}
\begin{lemma}
Suppose that $\displaystyle{\varepsilon^i_t=0}$, ${\forall i}$ and $t$, and that the sequence $\lbrace \hat{\mathbf{p}}_n \rbrace$, assuming nonzero demand and that the price is away from the limits, generated by CDL converges to a limit point $\tilde{\mathbf{p}}$, which satisfies 
$\displaystyle{\tilde{p}^i=-\frac{\lambda^i(\tilde{\mathbf{p}})}{\nabla_{{p}^i}\lambda^i(\tilde{\mathbf{p}})}}$, and $\tilde{p}^i \in \mathcal{P}^i$.
Then, $\displaystyle{\tilde{\mathbf{p}}}$ is exactly $\mathbf{p}^\ast$. \label{lemma1}
\end{lemma}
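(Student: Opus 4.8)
The plan is to recognize the limit-point condition as nothing more than the first-order stationarity condition of firm $i$'s clairvoyant revenue maximization, then to upgrade stationarity to global optimality using concavity, and finally to pin down the limit via uniqueness of the clairvoyant GNE. Since $\mathbb{E}[\varepsilon^i]=0$, the noiseless revenue is $r^i(\mathbf{p})=p^i\lambda^i(\mathbf{p})$, hence $\partial r^i/\partial p^i(\mathbf{p})=\lambda^i(\mathbf{p})+p^i\nabla_{p^i}\lambda^i(\mathbf{p})$, and the hypothesis $\tilde p^i=-\lambda^i(\tilde{\mathbf{p}})/\nabla_{p^i}\lambda^i(\tilde{\mathbf{p}})$ is precisely $\lambda^i(\tilde{\mathbf{p}})+\tilde p^i\nabla_{p^i}\lambda^i(\tilde{\mathbf{p}})=0$ for every $i$.

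First I would verify that the limit of the DDEP iterates does satisfy this equation (in case it is read as a consequence rather than a bare hypothesis). By Step~3, $\hat{\mathbf{p}}_{n+1}$ solves the approximate KKT system (\ref{DDEPKKT}) built from the Step~2 estimates $(\hat\alpha^i_{n+1},\hat\beta^{ij}_{n+1})$. In the noiseless case the Step~2 regression interpolates exactly through the $N+1$ probed price configurations of Step~1, namely $\hat{\mathbf{p}}_n$ and the $N$ single-coordinate perturbations by $\delta_n$; hence $\hat\beta^{ij}_{n+1}$ is a finite difference of $\lambda^i$ along the $j$-th coordinate at step size $\delta_n$, and $\hat\alpha^i_{n+1}$ the matching intercept. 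Because $\delta_n=I_n^{-1/4}\to 0$ and $\hat{\mathbf{p}}_n\to\tilde{\mathbf{p}}$, and $\lambda^i$ is twice differentiable on the compact box $\prod_i\mathcal{P}^i$, these estimates converge: $\hat\beta^{ii}_{n+1}\to-\nabla_{p^i}\lambda^i(\tilde{\mathbf{p}})$, $\hat\beta^{ij}_{n+1}\to\nabla_{p^j}\lambda^i(\tilde{\mathbf{p}})$ for $j\neq i$, and $\hat\alpha^i_{n+1}\to\lambda^i(\tilde{\mathbf{p}})-\sum_k\nabla_{p^k}\lambda^i(\tilde{\mathbf{p}})\,\tilde p^k$. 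Passing to the limit in (\ref{DDEPKKT}) --- using that $\tilde{\mathbf{p}}$ is interior so $\mu^{i,2}=\mu^{i,3}=0$, and $\lambda^i(\tilde{\mathbf{p}})>0$ so $\mu^{i,1}=0$ by complementary slackness --- the stationarity row, after substituting the limiting coefficients, collapses to $\lambda^i(\tilde{\mathbf{p}})+\tilde p^i\nabla_{p^i}\lambda^i(\tilde{\mathbf{p}})=0$, as claimed.

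It then remains to identify $\tilde{\mathbf{p}}$ with $\mathbf{p}^\ast$. The equations $\lambda^i(\tilde{\mathbf{p}})+\tilde p^i\nabla_{p^i}\lambda^i(\tilde{\mathbf{p}})=0$, together with $\lambda^i(\tilde{\mathbf{p}})\ge 0$, $p^{i,l}\le\tilde p^i\le p^{i,h}$, and the vanishing multipliers, are exactly the clairvoyant KKT system (\ref{KKT}) (recall $\mathbb{E}[\varepsilon^i]=0$). By Assumption~(iii) each $r^i(\cdot,p^{-i})$ is concave, so stationarity at $\tilde p^i$ is sufficient for $\tilde p^i$ to be a global maximizer of firm $i$'s revenue given $\tilde p^{-i}$; since this holds simultaneously for all $i$, $\tilde{\mathbf{p}}$ satisfies the defining inequalities $r^i(\tilde p^i,\tilde p^{-i})\ge r^i(p^i,\tilde p^{-i})$ of a GNE. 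Finally I would invoke uniqueness of the solution of the clairvoyant KKT system --- already presumed by the paper when it speaks of \emph{the} clairvoyant GNE and when Step~3 returns the unique solution of (\ref{DDEPKKT}) --- to conclude $\tilde{\mathbf{p}}=\mathbf{p}^\ast$.

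The main obstacle is the limiting argument of the second paragraph: one must control the finite-difference error in the Step~2 estimates (this needs boundedness of the second derivatives of $\lambda^i$ on $\prod_i\mathcal{P}^i$ together with $\delta_n\to 0$), and one must know that the solution map of the KKT system (\ref{DDEPKKT}) is continuous in its coefficients, so that $\hat{\mathbf{p}}_{n+1}\to\tilde{\mathbf{p}}$ genuinely forces the limiting coefficients into the limiting stationarity equations. If instead the first-order characterization of the limit point is taken as a hypothesis, the only remaining subtlety is uniqueness of the clairvoyant GNE, which I would obtain from the strict monotonicity in Assumption~(i)--(ii) combined with the concavity in Assumption~(iii).
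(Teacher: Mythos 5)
Your proposal is correct and follows essentially the same route as the paper's proof: both compute the noiseless Step~2 estimates as finite differences that converge to $-\nabla_{p^i}\lambda^i(\tilde{\mathbf{p}})$ and $\nabla_{p^j}\lambda^i(\tilde{\mathbf{p}})$ as $\delta_n\to 0$, pass to the limit in the Step~3 best-response/fixed-point relation to obtain $\lambda^i(\tilde{\mathbf{p}})+\tilde p^i\nabla_{p^i}\lambda^i(\tilde{\mathbf{p}})=0$ for every $i$, and identify this with the first-order conditions characterizing the unique clairvoyant GNE. The only difference is that you spell out the vanishing of the KKT multipliers and the concavity step upgrading stationarity to global optimality, which the paper leaves implicit.
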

\proof{Proof of Lemma \ref{lemma1}}
Without noise, the estimate of the coefficients has a closed-form expression in terms of the rate of demand change with respect to the price increment $\delta_n$:  
\begin{align}
\hat{\beta}_{n+1}^{ii}&=\displaystyle{-\frac{\lambda^i({\hat{p}_n^i+\delta_n,\hat{p}_n^{-i})-\lambda^i({\hat{\mathbf{p}}}_n)}}{\delta_n}}, &\forall i, \label{betaii} \mbox{ and}  \\
\hat{\beta}_{n+1}^{ij}&=
\displaystyle{\frac{\lambda^i({\hat{p}_n^j+\delta_n,\hat{p}_n^{-j})- \lambda^i(\hat{\mathbf{p}}_n)}}{ \delta_n}},  &\forall i,j, i\neq j, \label{betaij}
\end{align}
where $\lambda^i(\hat{p}_n^j+\delta_n,\hat{p}_n^{-j})$ denotes the demand faced by firm $i$ at the price vector whose $j$th element is $\hat{p}_n^j+\delta_n$ and those of others are $\hat{p}_n^{-j}$. It can be inferred from the design of Step 2 in the algorithm that
\begin{align}
\hat{\alpha}_{n+1}^i&=\lambda^i(\hat{\mathbf{p}}_n)+\hat{\beta}_{n+1}^{ii}\hat{p}^i_n-\sum_{j,j\neq i}^F \hat{\beta}_{n+1}^{ij}\hat{p}^j_n, &\forall i ,&\label{alphai} 
\end{align}
and that the closed-form (unconstrained) maximizer of the revenue function is expressed as
\begin{align}
    \hat{p}_{n+1}^i&= \frac{\hat{\alpha}_{n+1}^i+\sum\limits_{j\neq i}^F \hat{\beta}_{n+1}^{ij}\hat{p}_{n+1}^{j}}{2\hat{\beta}_{n+1}^{ii}},&\forall i \label{pi}.            
\end{align}
Suppose that the sequence $\lbrace \hat{\mathbf{p}}_n \rbrace$ converges to limit vector $\tilde{\mathbf{p}}$. 
It must be, therefore, that the sequences $\lbrace \hat{\beta}^{ii}_{n} \rbrace$ and $\lbrace \hat{\beta}_n^{ij} \rbrace$ converge to the directional derivatives of $\lambda^i(\tilde{\mathbf{p}})$ at $\tilde{\mathbf{p}}$}, $\tilde{\beta}^{ii}$ and $\tilde{\beta}^{ij}$, respectively. That is, $$\displaystyle{\tilde{\beta}^{ii}=-\nabla_{{p}^i} \lambda^i(\tilde{\mathbf{p}}) \mbox{ and }
\tilde{\beta}^{ij}=\nabla_{{p}^j} \lambda^i(\tilde{\mathbf{p}})}.$$ Therefore, (\ref{alphai}) implies that $\lbrace \hat{\alpha}_n^{i} \rbrace$ must converge to $$\displaystyle{\tilde{\alpha}^i=\lambda^i(\tilde{\mathbf{p}})-\nabla_{{p}^i} \lambda^i(\tilde{\mathbf{p}})\tilde{p}^i-\sum\limits_{j\neq i}^F \nabla_{{p}^j} \lambda^i(\tilde{\mathbf{p}})\tilde{p}^j.}$$ 
We can derive that
$$\tilde{\alpha}^i-\tilde{\beta}^{ii}\tilde{p}^i+\sum\limits_{j\neq i}^F\tilde{\beta}^{ij}\tilde{p}^j=\lambda^i(\tilde{\mathbf{p}})-\nabla_{{p}^i} \lambda^i(\tilde{\mathbf{p}})\tilde{p}^i-\sum\limits_{j\neq i}^F \nabla_{{p}^j} \lambda^i(\tilde{\mathbf{p}})\tilde{p}^j +\nabla_{{p}^i} \lambda^i(\tilde{\mathbf{p}}) \tilde{p}^i+\sum\limits_{j\neq i}^F\nabla_{{p}^j} \lambda^i(\tilde{\mathbf{p}})\tilde{p}^j=\lambda^i(\tilde{\mathbf{p}}).$$
Thus, $\tilde{p}^i$ must satisfy the following equation:
$$\displaystyle{\tilde{p}^i=\frac{\tilde{\alpha}^i+\sum\limits_{j\neq i}^F \tilde{\beta}^{ij}\tilde{p}^j}{2\tilde{\beta}^{ii}}=\frac{\tilde{p}^i}{2}-\frac{\lambda^i(\tilde{\mathbf{p}})}{2\nabla_{{p}^i}\lambda^i(\tilde{\mathbf{p}})}.}$$

When the mean demand function is $\lambda^i(\mathbf{p})$, $p^i$ satisfies the first-order condition for firm $i$'s revenue maximization problem, which is $\lambda^i(\mathbf{p})+p^i\nabla_{p^i}\lambda^i(\mathbf{p})=0$; when demand function is $\alpha^i-\beta^{ii}{p}^i+\sum\limits_{j\neq i}^F \beta^{ij}p^j$,  $p^i$ satisfies the first-order condition for firm $i$'s revenue maximization problem, which is $\alpha^i-2\beta^{ii}p^i+\sum\limits_{j\neq i}^F \beta^{ij}p^j=0$ . 
The statement above shows that these two first-order conditions have the same output only at $\tilde{p}^i$. 
That is, if the parameters of the linear demand converges to the directional derivative of the true demand, then the first order condition of the revenue functions under these two models are the same.
In a competitive environment, the best response function can be inferred by the first-order condition for each firm's revenue maximization problem. Hence, $\tilde{\mathbf{p}}$ must be the unique clairvoyant NE $\mathbf{p}^\ast$, which can be obtained by solving the concatenated first-order conditions.
\qed
\endproof
The intuition behind this lemma is akin to the gradient descent method, where the decision variables are updated through a first-order approximation. Moreover, the first order approximation of the revenue function $\alpha^i-2\beta^{ii}{p}^i+\sum\limits_{j\neq i}^F \beta^{ij}p^j$ is equivalent to adding the term $-\beta^{ii}{p}^i$ to the linear demand function  $\alpha^i-\beta^{ii}{p}^i+\sum\limits_{j\neq i}^F \beta^{ij}p^j$. And price experimentation aligns the estimation of parameters with the finite difference method. Therefore, we state that the linear model is sufficient for learning when the environment is noiseless.  Next, we argue the Lemma~\ref{lemma1} still holds when the environment is noisy.

\subsubsection*{Analysis with Noises} In fact, noise $\varepsilon_{t}^i$ always exists over the time horizon $T$. Hence, 
\begin{align*}
\hat{\beta}_{n+1}^{ii} &=\displaystyle{-\frac{\sum\limits_{t=t_n+iI_n+1}^{t_n+(i+1)I_n}\bigg(\lambda^i(\hat{p}_n^i+\delta_n,\hat{p}_n^{-i})+\varepsilon^i_{t}\bigg) -\sum\limits_{t=t_n+1}^{t_n+I_n}\bigg(\lambda^i(\hat{\mathbf{p}}_n)+\varepsilon^i_{t}\bigg)}{I_n \delta_n}}\\[10pt]
&=-\nabla_{{p}^i}\lambda^i(\hat{\mathbf{p}}_n)+O(\delta_n)+\frac{1}{\delta_n}\frac{1}{I_n}\left(\sum_{t=t_n+iI_n+1}^{t_n+(i+1)I_n} \varepsilon^i_{t}-\sum\limits_{t=t_n+1}^{t_n+I_n} \varepsilon^i_{t}\right), \; \forall i, \mbox{ and}
\end{align*}
\begin{align*}
\hat{\beta}_{n+1}^{ij} &=\displaystyle{\frac{{\sum\limits_{t=t_n+jI_n+1}^{t_n+(j+1)I_n}\bigg(\lambda^i(\hat{p}_n^j+\delta_n,\hat{p}_n^{-j})+\varepsilon^i_{t}\bigg) -\sum\limits_{t=t_n+1}^{t_n+I_n}\bigg(\lambda^i(\hat{\mathbf{p}}_n)+\varepsilon^i_{t}\bigg)}}{I_n \delta_n}} \\[10pt]
&=\displaystyle{-\nabla_{{p}^j}\lambda^i(\hat{\mathbf{p}}_n)+O(\delta_n)+\frac{1}{\delta_n}\frac{1}{I_n}\left({\sum\limits_{t=t_n+jI_n+1}^{t_n+(j+1)I_n}} \varepsilon^i_{t}-{\sum\limits_{t=t_n+1}^{t_n+I_n} \varepsilon^i_{t}}\right)}, \; \forall i, j,
\end{align*}
where $O(\delta_n)$ denotes the quantity that is, at most, some constant multiplied by $\delta_n$. Derived using Hoeffding’s inequality (formally in E-Companion), an exponential bound shows that $$\displaystyle{I_n^{-1}\left({\sum_{t=t_n+iI_n+1}^{t_n+(i+1)I_n}} \varepsilon^i_{t}-{\sum_{t=t_n+1}^{t_n+I_n} }\varepsilon^i_{t}\right)} \mbox{ and }\displaystyle{I_n^{-1}\left({\sum_{t=t_n+jI_n+1}^{t_n+(j+1)I_n}} \varepsilon^i_{t}-{\sum_{t=t_n+1}^{t_n+I_n} \varepsilon^i_{t}}\right)}$$ may be bounded above by a factor of $(\log (I_n)/I_n)^{1/2}$ with high probability. As $n$ grows, $\delta_n$ and thus $\delta^{-1}_n(\log (I_n)/I_n)^{1/2}$ converge to zero (proved formally in Proposition~\ref{prop2}), then $$\hat{\beta}^{ii}_{n+1} \approx  -\nabla_{{p}^i}\lambda^i(\hat{\mathbf{p}}_n) \mbox{ and } \hat{\beta}^{ij}_{n+1} \approx  \nabla_{{p}^j}\lambda^i(\hat{\mathbf{p}}_n).$$ In particular, when explicitly accounting for noise, the arguments above ensure that the effects of noise vanish as $n$ increases and that the fitted linear model can serve as an estimation of the underlying demand model without being affected by $F$.

However, we show later that when facing competition, the upper bound of revenue regret, derived in the same way as that of one firm, is scaled by $F$ (i.e., Theorem~\ref{theorem3}), the upper bound of revenue difference is scaled by $F^2$ (i.e., Theorem~\ref{theorem2}), and the deviation between the best responses and the clairvoyant NE price is upper bounded by a factor of $F^2I_n^{-{1}/{2}}$ (i.e., Inequality (EC.2)
).

\subsection{Convergence Analysis}\label{convergence}

We clarify that $\hat{\mathbf{p}}_{n+1}$ generated by CDL is obtained by the KKT system (\ref{DDEPKKT}) while the clairvoyant NE $\mathbf{p}^\ast$ is obtained by (\ref{KKT}).
Recall that in Section~2 we have known that each firm's optimal pricing decision at each period can be solved through the best response function $z^i(p^i,p_t^{-i}$). Now for the purpose of analysis, we define
$$\displaystyle{ \breve{\alpha}^i(\mathbf{p})=\lambda^i(\mathbf{p})-\nabla_{p^i} \lambda^i(\mathbf{p})p^i+\sum_{j=1,j\neq i}^F \nabla_{p^j} \lambda^i(\mathbf{p})p^j \mbox{ and }\breve{\beta}^{ij}(\mathbf{p})= \nabla_{p^j} \lambda^i(\mathbf{p})}, \forall i,j,$$
which means that the parameters $\breve{\beta}^{ij}$ form the gradient of the underlying demand curve $\lambda^i(\mathbf{p})$. In this case, the demand shocks $\varepsilon_t^i=0$ for all $t$ and $i$ so there is no \textit{estimation error}. Thus, the corresponding revenue of firm $i$ is 
$$\displaystyle{r^i=p^i\left(\breve{\alpha}^i(\mathbf{p})-\breve{\beta}^{ii}(\mathbf{p})p^i+\sum_{j=1,j\neq i}^F\breve{\beta}^{ij}(\mathbf{p})p^j\right),\forall i.}$$ 
We introduce a best response operator when $\varepsilon_t^i=0$ for all $i$  $$\displaystyle{\breve{z}^i\left(p^{-i} \right)=\argmax_{p^i\in \mathcal{P}^i} p^i\left(\breve{\alpha}^i(\mathbf{p})-\breve{\beta}^{ii}(\mathbf{p})p^i+\sum_{j=1,j\neq i}^F\breve{\beta}^{ij}(\mathbf{p}) p^j\right).}$$ 
{Let $\breve{\mathbf{z}}(\mathbf{p})=\left(\breve{z}^i\left(p^{-i} \right)\right)_i^F$} be the collection of best responses of all firms. The following proposition is based on Theorem~5 in \cite{cachon2006game}.  

\begin{proposition}\label{contraction}
The contraction mapping could
be restated as: for every $z^i$, the matrix $A(\mathbf{p})$ consisting of the derivatives of the best response functions 
$$\displaystyle{A(\mathbf{p})=\begin{bmatrix}
0 & \frac{\partial z^1}{\partial p^2} &\cdots & \frac{\partial z^1}{\partial p^F}\\
\frac{\partial z^2}{\partial p^1}&0 &\cdots &\frac{\partial z^2}{\partial p^F} \\
\cdots &\cdots &\cdots &\cdots \\
\frac{\partial z^F}{\partial p^1} &\frac{\partial z^F}{\partial p^2} &\cdots & 0
\end{bmatrix}},$$
has the largest absolute eigenvalue $\rho(A(\mathbf{p}))<1$ for all $p^i \in \mathcal{P}^i$  (see \cite{horn2012matrix}).
Using the implicit function theorem there (see \cite{bertsekas2016nonlinear}), $\rho(A(\mathbf{p}))<1$ could be restated as the diagonal dominance condition, which is our Assumption~\ref{assumtion1}. (iii). 
\end{proposition}

We have the following proposition for a  \emph{deterministic} demand function as a warm-up. 
\begin{proposition}{}{}
If $\displaystyle{p^{i}=\breve{z}^i\left( p^i,p_t^{-i} \right)}$ for all $i$, then there exists a constant $\displaystyle{\gamma \in (0,1)}$ such that \label{prop1}
$$\Vert \mathbf{p}^{\ast} -\breve{\mathbf{z}}(\hat{\mathbf{p}}_n) \Vert \leq \gamma \left\Vert \mathbf{p}^{\ast} -\hat{\mathbf{p}}_n \right\Vert.$$
\end{proposition}
\proof{Proof of Proposition~\ref{prop1}.}
The parameters $\breve{\alpha}^i(\mathbf{p})$ and $\breve{\beta}^{ij}(\mathbf{p})$ constitute the linear demand function consistent with the underlying demand curve $\lambda^i (\mathbf{p})$ at $\mathbf{p}$ when $\varepsilon^i=0$.  Because of $\breve{\beta}^{ij}(\mathbf{p})= \nabla_{p^j} \lambda^i(\mathbf{p})$ for all $i,j$, then it has that $\sum_{j \neq i}^F \left\vert \breve{\beta}^{ij}(\mathbf{p})\right\vert < \left\vert \breve{\beta}^{ii}(\mathbf{p})  \right\vert$ according to Assumption~\ref{assumtion1}. (iii). Therefore, 
Proposition~\ref{contraction} implies that 
$\breve{\mathbf{z}}(\mathbf{p})$ is a contraction mapping with respect to the norm $\Vert \cdot \Vert$ and there exists a constant $\gamma \in (0,1)$ such that $$\Vert \breve{\mathbf{z}}(\mathbf{p}^\ast)-\breve{\mathbf{z}}(\hat{\mathbf{p}}_n)) \Vert \leq \gamma \Vert \mathbf{p}^\ast -\hat{\mathbf{p}}_n \Vert,$$
According to Lemma 1, we have $\mathbf{p}^{\ast}$ is a fixed point of $\mathbf{z}(\mathbf{p^\ast})$ and $\breve{\mathbf{z}}(\mathbf{p^\ast})$, that is $\mathbf{p}^\ast=\mathbf{z}(\mathbf{p^\ast})=\breve{\mathbf{z}}(\mathbf{p^\ast})$. Concentrating on the equation above, we can obtain that
  $$\Vert \mathbf{p}^{\ast} -\breve{\mathbf{z}}(\hat{\mathbf{p}}_n) \Vert \leq \gamma \left\Vert \mathbf{p}^{\ast} -\hat{\mathbf{p}}_n \right\Vert.$$
where $\hat{\mathbf{p}}_n$ is the equilibrium that we have computed in stage $n$. 
\qed
\endproof

\subsubsection*{Stochastic Demand Functions}
Proposition \ref{prop1} is based on a deterministic  demand function, and the convergence result follows directly from the property of a contraction mapping. Now, we focus on a \emph{stochastic} demand function, and the next lemma shows the uniqueness of $\hat{\mathbf{p}}_n$.


\begin{lemma} \label{lemma2}
Under Assumption 1,$\hat{\mathbf{p}}_{n+1}$ is a unique NE at stage $n+1$ if $\hat{\beta}^{ij}_{n+1}$ converges to $\breve{\beta}^{ij}(\hat{\mathbf{p}}_n)$ for all $i$ and $j$.
\end{lemma}
The proof of this lemma is in (EC.3).
Now we aim to establish the main convergence result.
 The principal technique for establishing the convergence of price in probability is described below. Recall that $\hat{\alpha}^i_{n+1},\hat{\beta}^{ii}_{n+1},\hat{\beta}^{ij}_{n+1}$ are the parameters of the estimated demand model and $\hat{\mathbf{p}}_{n+1}$ is the price vector computed by CDL at the end of stage $n$, then for every stage $n$, we introduce an estimated \emph{best response} operator as
$$\hat{z}^i_n\left(p^{-i}\right)=\argmax_{p^i\in \mathcal{P}^i} p^i\left(\hat{\alpha}_{n+1}^i-\hat{\beta}^{ii}_{n+1}p^i+\sum\limits_{j=1,j\neq i}^{F}\hat{\beta}^{ij}_{n+1}p^j\right).$$
Let $\hat{\mathbf{z}}_n(\mathbf{p})=\left(\hat{z}_n^i\left(p^{-i} \right)\right)_i^F$ denotes the collection of all firms' best responses $\hat{z}^i_n\left(\cdot \right)$ for every stage $n$. Since $\hat{\mathbf{p}}_{n+1}$ is the equilibrium point computed by CDL, $\hat{\mathbf{p}}_{n+1}$ is the fixed point of $\hat{\mathbf{z}}_n(\cdot)$, that is, $\hat{\mathbf{p}}_{n+1}=\hat{\mathbf{z}}_n(\hat{\mathbf{p}}_{n+1})$. Previously, we have conducted the (best  response) operator $\breve{z}^i\left(p^{-i}\right)$  for the case where there is no estimation error,  now we compare the difference between $\breve{z}^i\left(p^{-i}\right)$ and $\hat{z}^i
_n\left(p^{-i}\right)$.
We present the following proposition:

\begin{proposition}{}{}
 \label{prop2} 
At stage $n$, for any given $\hat{p}_{n+1}^i \in \mathcal{P}^i$ generated by CDL, we have  $\hat{\mathbf{p}}_{n+1}=\hat{\mathbf{z}}_n(\hat{\mathbf{p}}_{n+1})$ such that  the operator $\breve{\mathbf{z}}(\hat{\mathbf{p}}_n)$ satisfies, for some positive constant~$K_1$, 
$$\mathbb{E} \left[ \Vert \breve{\mathbf{z}}(\hat{\mathbf{p}}_{n}) -\hat{\mathbf{p}}_{n+1} \Vert^2  \right] \leq F^2K_1 I_n^{-\frac{1}{2}}.$$ 
\end{proposition}
Proposition~\ref{prop2}, which is proved in E-Companion, provides an upper bound for the difference between $\breve{\mathbf{z}}(\hat{\mathbf{p}}_{n})$ and  $\hat{\mathbf{p}}_{n+1}$. The upper bound is proportional to the squared number of competitive firms, $F^2$, and converges to zero as the number of stages increases. To prove this, we first show that for each firm, the difference between the value of the best response function and that of the estimated best response function is bounded above by a constant  $C_n^i$ with high probability, where $C_n^i$ can be viewed as a function of $\delta_n$ (see the proof in EC.4.1).
We define a bad event for firm~$i$ as the difference between these two values exceeding $C_n^i$. If a bad event happens at stage $n$ for one firm, then the vector $\hat{\mathbf{z}}_n(\hat{\mathbf{p}}_{n+1})$ will be affected. Thus, we derive that the norm of the difference vector between the values of the best response functions and those of the estimated best response functions for all firms is upper bounded with a probability of at least $1-({F^2K}/{I_n})$ for a suitable constant $K$, and this probability bound is gradually convergent to $1$ as time grows. Then, we find an upper bound on the expectation of the square norm of the difference vector between the values of the best response functions and those of the estimated best response functions for all firms. The effect of the price experimentation, in which adding $\delta_n$ affects the demand of all participants, should gradually vanish over time. Otherwise, it will cause a poor estimation of the parameters even if demand shocks do not exist. Broadly speaking, the argument above together with Proposition~\ref{prop1} provides a constructive way to design effective price experimentation to obtain the convergence of prices. 
\begin{theorem}{}{}\label{theorem1}Under the assumptions, the prices $\mathbf{p}_t$ at period~$t$, generated by the CDL algorithm, converges to $\mathbf{p}^\ast$ at a rate of $\mathcal{O}(F^2T^{-1/2})$.
\end{theorem}
Note that the NE $\mathbf{p}^\ast$ is the solution that satisfies the KKT system (\ref{KKT}). The proof is stated in EC.4. 
The key idea centers on the following inequality:
 $$ \mathbb{E}\left[\Vert\mathbf{p}^\ast-\hat{\mathbf{p}}_{n+1}\Vert^2\right] \leq \mathbb{E}\left[\left(\Vert \mathbf{p}^\ast-\breve{\mathbf{z}}(\hat{\mathbf{p}}_{n})\Vert +\Vert \breve{\mathbf{z}}(\hat{\mathbf{p}}_{n})-\hat{\mathbf{p}}_{n+1}\Vert \right)^2\right].$$
Combining the results in Propositions \ref{prop1} and \ref{prop2}, we obtain that the expected difference of $\hat{\mathbf{p}}_{n+1}$ and $\mathbf{p}^\ast$ will tend to be $0$ at a rate of $\mathcal{O}(F^2T^{-1/2})$. By the price experiments, where $\delta_n$ is set as $I_n^{-1/4}$, we conclude that $\mathbb{E}\left[\Vert\mathbf{p}_t-\mathbf{p}^\ast\Vert^2\right]$ converges to $0$ at a rate of $\mathcal{O}(F^2T^{-1/2})$.

\subsection{Revenue Difference and Regret}\label{differenceregret}
Theorem~\ref{theorem1} provides the asymptotically equilibrium result for the equilibrium pricing problem: under the designed mechanism, it suffices to show that the linear model guarantees convergence to clairvoyant NE $\mathbf{p}^\ast$. This result offers a fundamental basis for the following analysis. We first analyze the revenue difference, as it may yield useful insights into regret analysis. Recall that the revenue difference denotes the difference in the revenue obtained when all firms set the price to $\mathbf{p}^\ast$ and that obtained when the price is set at $\mathbf{p}_t$. In a monopolistic setting such as that in \cite{besbes2015surprising} and \cite{chen2019coordinating}, through the Taylor expansion, a bound on the regret per period is derived as follows: 
  $$ \vert r(p^\ast)- r(p_t) \vert \leq K (p_t-p^\ast)^2.$$
  The first order term in the Taylor expansion is omitted because its corresponding derivative is equal to $0$ at the optimal price~$p^\ast$, and the second order derivatives are bounded by a positive constant $K$. However, in the competitive setting, the gradient~$\nabla r^i(\mathbf{p}^\ast)$ is a vector that consists of the first order derivatives of $r^i(\mathbf{p}^\ast)$ at each firm's price so that not all elements are equal to $0$, but only the first order derivative with respect to firm~$i$'s price at value $p^{i\ast}$ is equal to $0$. In addition, the second order term must be less than or equal to $0$ due to the concavity of the revenue function. $r^i(\mathbf{p}_t)$ might be greater than $r^i(\mathbf{p}^\ast)$ if the first order term is greater than the absolute value of the second order term, but it is very difficult to judge which term dominates the other. Therefore, we use the revenue difference to measure the difference between $r^i(\mathbf{p}_t)$ and $r^i(\mathbf{p}^\ast)$. We obtain that
   \begin{align*}
    \left\vert r^i(\mathbf{p}^\ast)- r^i(\mathbf{p}_t) \right\vert & \leq \left\vert \nabla r^i(\mathbf{p}^\ast)^T( \mathbf{p}_t-\mathbf{p}^\ast)-\frac{1}{2} (\mathbf{p}_t-\mathbf{p}^\ast)^T\nabla^2 r^i(\mathbf{p}^\ast)(\mathbf{p}_t-\mathbf{p}^\ast)\right\vert\\
    &\leq K^\prime \left\Vert \mathbf{p}_t-\mathbf{p}^\ast \right\Vert+K^{\prime\prime}\left\Vert \mathbf{p}_t-\mathbf{p}^\ast \right\Vert^2,
\end{align*}
where $K^\prime$ and $K^{\prime\prime}$ are positive constants, and the last inequality uses the triangle inequality (i.e., $\vert a+b\vert \leq \vert a\vert+\vert b \vert$ for any real numbers $a$ and $b$). 

\begin{theorem}{}{}
Under the defined assumptions, the vector sequence~$ \displaystyle{\lbrace \mathbf{p}_t:t\geq 1 \rbrace}$ (compared with the clairvoyant Nash equililbrium~$\mathbf{p}^*$) satisfies
$$\mathbb{E} \left[\sum_{t=1}^T\left\vert r^i(p^{i\ast},p^{-i\ast})-r^i(p^i_t,p^{-i}_t)\right\vert \right]\leq F^2 K_6 T^{\frac{3}{4}},\quad  \forall i=1,\cdots,F,$$
for some positive constant $K_6$, $T \geq 2$, and $F \geq 2$. \label{theorem2} 
\end{theorem}
Theorem~\ref{theorem2}, whose proof is in EC.5,
implies that, by using this pricing policy, even if the underlying demand function is unknown, the difference between the revenue $r^i(p^{i\ast},p^{-i\ast})$ at the clairvoyant $\mathbf{p}^\ast$ and the revenue $r^i(p^i_t,p^{-i}_t)$ at $\mathbf{p}_t$ obtained by CDL is at most $F^2 K_6  T^{3/4}$, in which $F$ represents the number of participating firms. In other words, the revenue difference per period converges to zero. To prove this theorem, we use the convergence results in Theorem~\ref{theorem1}, which showed the converge rates of $\mathbf{p}_t$ to $\mathbf{p}^\ast$. We find a bound on the convergence rate of each term to show an upper bound on the revenue difference. The first order term $K^\prime\left\Vert \mathbf{p}-\mathbf{p}^\ast \right\Vert$ is the key factor raising an upper bound to order $\mathcal{O}(T^{3/4})$. When $F$ equals 1, meaning that there are no competitors in the market, the first order term is equal to $0$ and an upper bound is of order $\mathcal{O}(T^{1/2})$. An upper bound on the revenue difference ($\sum_{t=1}^T\vert r^i(\mathbf{p}_t)-r^i(\mathbf{p}^\ast) \vert$) of order $\mathcal{O}(T^{3/4})$ is also shown based on the convergence rate of $\vert r^i(\mathbf{p}_t)-r^i(\mathbf{p}^\ast) \vert$ in \cite{meylahn2022learning}. (See Lemma~4 of \cite{meylahn2022learning}.)


Theorem~\ref{theorem2} gives us an upper bound on revenue difference, and we now analyze the revenue regret. It is known that the price $p_t^i$ of firm $i$ made by CDL is not necessarily the optimal price that truly maximize $i$'s own revenue. Such a true optimal price $p_t^{i\ast}$ is derived from the true best response function with respect to the true demand. 
Suppose that firm $i$ realizes the true demand function at some time $T$. In such a case, the firm would regret making the price decision~$p_t^i$ rather than $p_t^{i\ast}$ during the past period~$t=1,\ldots,T$. 
Hence, we obtain the following theorem regarding the regret bound whose proof is in EC.6:
\begin{theorem}{}{}
Under the defined assumptions, the sequence~$\{p^i_t: t\geq 1\}$ (compared with the sequence~$\lbrace p_t^{i\ast}: t\geq 1 \rbrace$) satisfies
$$\displaystyle{\mathbb{E} \left[\sum\limits_{t=1}^T\Big[r^i(p^{i\ast}_t,p^{-i}_t)-r^i(p^i_t,p^{-i}_t)\Big] \right]\leq FK_7  T^{\frac{1}{2}},\quad  \forall i=1,\cdots,F,}$$
for some positive constant $K_7$,  $T \geq 2$ and $F \geq 2$. \label{theorem3}
\end{theorem}

The key ideas to prove this theorem are the following: we divide the price experimentation (Step~1) into three parts and analyze the difference between $p_t^{i\ast}$ and $p_t^i$ of each part separately. The first part contains the first $I_n$ periods, where no adjustment are made on $p_t^i$. The second part contains the periods where firm $i$ change its price. The third part contains the periods where the firms except firm $i$ change their prices. Then, we obtain the convergence rate of $p_t^i$ to $p_t^{i\ast}$. From this, the regret bound is derived. 

Theorem~\ref{theorem3} shows that after time $T$, the regret between making pricing at $p_t^i$ and making pricing at $p_t^{i\ast}$ is at most $FK_7 T^{1/2}$, which is also proportional to the number of firms. As we have stated, the revenue differences have an inferior theoretical bound compared to those obtained based on the regret of a single firm due to the non-omission of the first  order term.  In our problem, the revenue difference measures the performance of the algorithm against a static benchmark in which all firms knew their demand functions in advance  (that is, the equilibrium $\mathbf{p}^\ast$). The regret measures the performance of the algorithm against a dynamic benchmark in which the firm knows the others' pricing decisions at each period $t$ and its own demand function in advance. We denote by $p_t^{i\ast}$ the optimal price of firm $i$ with the knowledge of the underlying demand. The revenue used to compute regret is the revenue obtained at $p_t^{i\ast}$, which is the true revenue maximizer at time $t$.
The optimal pricing $p_t^{i\ast}$ is virtual and is, therefore,  not useful in capturing sales information. Thus, firm $i$ cannot predict the subsequent pricing of other firms, and the convergence of $p_t$ to $p_t^{i\ast}$ is unrelated to the convergence of $p_t^{-i}$ to $p_t^{-i\ast}$. 
In fact, when $p^{i\ast}_t$ no longer changes, it means that $p^{i\ast}_t$ becomes a fixed point of the best response function, $z^i$, indicating that pricing decisions have reached the clairvoyant NE $\mathbf{p}^\ast.$ 

\remark{The worst-case regret as specified in \cite{besbes2009dynamic}, \cite{broder2012dynamic} is the performance measure of a pricing policy when the nature ``picks'' the worst possible demand function, and they provided a problem instance to show a lower bound on the worst-case regret, which illustrates the limitation on the performance of \emph{any} admissible pricing policy. Their regret lower bound demonstrated that the regret upper bound was tight, meaning that the lower bound and upper bound have the same order in terms of $T$ asymptotically. An information-theoretic regret lower bound was established in \cite{keskin2014dynamic}. They showed that the regret lower bound has an order of $\sqrt{T}$ asymptotically. To our best knowledge, regret lower bound on a similar measure for the competitive setting has not appeared in the literature. Since our result matches the aforementioned bounds, our conjecture is that our regret upper bound is tight, and we leave establishing regret lower bound to future work.}

\subsection{Extension: Partially-Clairvoyant Model and Results}\label{flawedmodel}
Given the results above, a question arises as to \emph{whether the regret and difference bounds still hold when some firms have knowledge of the underlying demand curves $\lambda ^i(\cdot)$ and the distributions of demand shocks $\varepsilon^i$}.

\cite{cooper2015learning} considered a linear model in which decision makers ignore the impact of competitors, that is, the so-called \emph{flawed} model $\alpha^i-\beta^{ii} p^i$ in which the prices of the other firms do not appear in an estimated demand curve. An important property in their work is the demand consistency property. They considered a limit point $\tilde{\mathbf{p}}$ which is the clairvoyant NE and assumed that all firms use these flawed estimated demand curves. The estimated demand and the expected demand at $\tilde{\mathbf{p}}$ are equal, that is, $\alpha^i-\beta^i\tilde{p}^{i\ast}=\mathbb{E}[D^i(\tilde{\mathbf{p}})]$. However, using this flawed model may lead the firms to converge to some potential limit points that are \emph{not} Nash equilibria. Thus, the flawed estimated demand curves \emph{cannot} be used directly for our purpose. 

Suppose that $F^\prime$ firms do not know the underlying demand curve. We thus define the partially-flawed approximated demand curves under the partially-clairvoyant model for these firms as  \[\alpha^i-\beta^{ii}p^i+\sum\limits_{j=1,j\neq i}^{F^\prime} \beta^{ij} p^j,\] in which the effects of the firms with the knowledge of underlying demand curves do not appear in this function.  Let $k$ denote the index of the firms with known demand, $k=F^\prime+1,\cdots,F$. The price experimentation step now only involves $F^\prime$ firms.
\begin{itemize}
    \item Step 1. Setting prices: Firm $k$'s pricing $p_t^k$, for $k=F^\prime+1,\cdots,F$, is:
\begin{equation*}
p_t^k=\hat{p}_n^k  \;\; \;\;\forall t=t_{n}+1,\cdots,t_{n}+(F^\prime+1)I_n.
\end{equation*}
\end{itemize}
 
In Step~2, $F^\prime$ firms keep learning the demand function via a linear regression and the linear approximation model is $\alpha^i-\beta^{ii}p^i+\sum\limits_{j\neq i}^{F^\prime} \beta^{ij}p^j$. The prices of firms with knowledge of the underlying demand functions are not involved in this linear approximation model. Step~3 of the modified CDL is given as follows:
\begin{itemize}
\item Step 3. Computing the equilibrium: Solve the following two systems simultaneously: 

\begin{equation}
\begin{array}{cl}
\begin{array}{c}
\hat{\alpha}^i_{n+1}-2\hat{\beta}^{ii}_{n+1}p^i+\sum\limits_{j,j\neq i}^{F^\prime} \hat{\beta}^{ij}_{n+1}p^{j} 
\end{array}
-\mu^{i,1}+\mu^{i,2}=0, & \forall i,\\[6pt]
\mu^{i,1}\geq 0, \mu^{i,1}\cdot\left(p^i-p^{i,h}\right)=0,p^i-p^{i,h} \leq 0 &\forall i,\\[6pt]
\mu^{i,2}\geq 0, \mu^{i,2}\cdot\left(p^{i,l}-p^i\right)=0,p^{i,l}-p^i \leq 0 &\forall i,
\end{array}\label{eq:i}
\end{equation}
and
\begin{equation}
\begin{array}{cl}
\begin{array}{cc}
\lambda^k(\mathbf{p})+p^k \nabla_{p^k} \lambda^k(\mathbf{p})+\mathbb{E}(\varepsilon^k)
\end{array}
-\mu^{k,1}+\mu^{k,2}=0 & \forall k,\\[6pt]
\mu^{k,1}\geq 0, \mu^{k,1}\cdot\left(p^k-p^{k,h}\right)=0, p^k-p^{k,h}\leq 0 &\forall k,\\[6pt]
\mu^{k,2}\geq 0, \mu^{k,2}\cdot\left(p^{k,l}-p^k\right)=0, p^{k,l}-p^k\leq 0 &\forall k.
\end{array}\label{eq:k}
\end{equation}
Then, the prices for each firm $\hat{p}^i_{n+1}$ and $\hat{p}^k_{n+1}$ are set to the solution of the above simultaneous equations \eqref{eq:i} and \eqref{eq:k}. 
\end{itemize}

In the above algorithm, the linear model is different from that described in Section~\ref{model}. We justify the choice of such partially-clairvoyant approximated demand curves in the following. If firms without knowledge of demand curves use the full estimated demand curves, the number of price-demand combinations in one stage should be $(F+1)$ to determine the values of $F+1$ unknown parameters. However, the clairvoyant firms have no need to estimate the demand curves and are thus not motivated to pursue price experimentation.\footnote{According to the diagonal dominance condition, if the firms without the knowledge of demand curves attend the price experimentation but the clairvoyant firms do not, then the revenue of the clairvoyant firms will increase.} Therefore, the price experimentation only generates $(F^\prime+1)$ price-demand combinations. If the firms without the knowledge of demand curves consider the full  estimated demand curves (i.e., $\alpha^i-\beta^{ii}p^i+\sum_{j,j\neq i}^F\beta^{ij}p^j$) while the clairvoyant firms do not attend the price experimentation, the result of experimentation is under-determined equations, resulting in inaccurate  parameters estimation. The algorithm requires that clairvoyant firms update prices based on their true best response functions at the end of each stage simultaneously. The prices of the clairvoyant firms would have no influence on the pricing decisions of the firms without knowledge of the demand functions while the latter's pricing decisions affect the former's. 

\subsubsection*{Analyses}
An analogy to Lemma~\ref{lemma1} in this partially-clairvoyant setting is as follows:

\begin{lemma}
Suppose that $\varepsilon^i_t=0$, $\forall i=1,\cdots,F^\prime$ and $t$, and that the sequence $\lbrace \hat{\mathbf{p}}_n \rbrace$ generated by CDL converges to a limit point $\tilde{\mathbf{p}}$, which satisfies 
$\displaystyle{\tilde{p}^i=-\frac{\lambda^i(\tilde{\mathbf{p}})}{\nabla_{\tilde{p}^i}\lambda^i(\tilde{\mathbf{p}})}}$, and $\tilde{p}^i \in \mathcal{P}^i$.
Then, $\displaystyle{\tilde{\mathbf{p}}}$ is exactly $\mathbf{p}^\ast$. \label{lemma3}
\end{lemma}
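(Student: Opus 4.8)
The plan is to replay the proof of Lemma~1 stage by stage, but keeping the two groups of firms separate: the $N^\prime$ learning firms ($i=1,\dots,N^\prime$), whose prices are governed by the least-squares fit and the reduced KKT block \eqref{eq:i}, and the clairvoyant firms ($k=N^\prime+1,\dots,N$), whose prices are governed by the true KKT block \eqref{eq:k}. For a learning firm $i$, I would first record the noiseless closed forms of the estimates produced by the modified Step~2. The crucial observation is that in the modified Step~1 the clairvoyant prices $\hat{p}_n^k$ are held fixed across all $(N^\prime+1)I_n$ observations of the stage, so they do not vary in the regression and their entire influence is swept into the fitted intercept; the variables that do move are only $p^i,\dots,p^{N^\prime}$. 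Hence, exactly as in \eqref{betaii}--\eqref{alphai} but with the cross-sum running over $j\le N^\prime$ only, one gets $\hat{\beta}_{n+1}^{ii}=-\bigl(\lambda^i(\hat{p}_n^i+\delta_n,\hat{p}_n^{-i})-\lambda^i(\hat{\mathbf{p}}_n)\bigr)/\delta_n$, $\hat{\beta}_{n+1}^{ij}=\bigl(\lambda^i(\hat{p}_n^j+\delta_n,\hat{p}_n^{-j})-\lambda^i(\hat{\mathbf{p}}_n)\bigr)/\delta_n$ for $j\le N^\prime,\ j\neq i$, and $\hat{\alpha}_{n+1}^i=\lambda^i(\hat{\mathbf{p}}_n)+\hat{\beta}_{n+1}^{ii}\hat{p}_n^i-\sum_{j\neq i}^{N^\prime}\hat{\beta}_{n+1}^{ij}\hat{p}_n^j$, where $\lambda^i$ and the difference quotients are still evaluated at the \emph{full} price vector, so the frozen clairvoyant prices are correctly accounted for through $\lambda^i(\hat{\mathbf p}_n)$.

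Next I would pass to the limit $\hat{\mathbf{p}}_n\to\tilde{\mathbf{p}}$, $\delta_n\to 0$, exactly as in Lemma~1: $\hat{\beta}_{n}^{ii}\to\tilde{\beta}^{ii}=-\nabla_{p^i}\lambda^i(\tilde{\mathbf{p}})$, $\hat{\beta}_{n}^{ij}\to\tilde{\beta}^{ij}=\nabla_{p^j}\lambda^i(\tilde{\mathbf{p}})$ for $j\le N^\prime,\ j\neq i$, and $\hat{\alpha}_n^i\to\tilde{\alpha}^i=\lambda^i(\tilde{\mathbf{p}})+\tilde{\beta}^{ii}\tilde{p}^i-\sum_{j\neq i}^{N^\prime}\tilde{\beta}^{ij}\tilde{p}^j$. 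Under the standing hypothesis that the limit price is interior with positive demand, the complementarity multipliers in \eqref{eq:i} vanish, so firm $i$'s block reduces to the unconstrained maximizer $\hat{p}_{n+1}^i=(\hat{\alpha}_{n+1}^i+\sum_{j\neq i}^{N^\prime}\hat{\beta}_{n+1}^{ij}\hat{p}_{n+1}^j)/(2\hat{\beta}_{n+1}^{ii})$, whose limit gives $\tilde{p}^i=(\tilde{\alpha}^i+\sum_{j\neq i}^{N^\prime}\tilde{\beta}^{ij}\tilde{p}^j)/(2\tilde{\beta}^{ii})$. As in Lemma~1, the cross-effect terms $\sum_{j\neq i}^{N^\prime}\tilde{\beta}^{ij}\tilde{p}^j$ cancel between the numerator and the term hidden inside $\tilde{\alpha}^i$, leaving $\tilde{p}^i=\tilde{p}^i/2-\lambda^i(\tilde{\mathbf{p}})/\bigl(2\nabla_{p^i}\lambda^i(\tilde{\mathbf{p}})\bigr)$, i.e. $\tilde{p}^i=-\lambda^i(\tilde{\mathbf{p}})/\nabla_{p^i}\lambda^i(\tilde{\mathbf{p}})$, which is precisely firm $i$'s true revenue first-order condition evaluated at the full vector $\tilde{\mathbf{p}}$ and is consistent with the hypothesis imposed on $\tilde{p}^i$.

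For the clairvoyant firms $k=N^\prime+1,\dots,N$ there is nothing to estimate: Step~3 feeds them the system \eqref{eq:k}, which is verbatim the true KKT system for firm $k$, and $\hat{\mathbf{p}}_{n+1}$ solves it jointly with \eqref{eq:i}. Since $\hat{\mathbf{p}}_n\to\tilde{\mathbf{p}}$ and the associated multiplier sequences stay bounded, taking the limit of \eqref{eq:k} shows that $\tilde{p}^k$ satisfies the true KKT conditions at $\tilde{\mathbf{p}}$. Combining this with the first-order conditions obtained for the learning firms, the whole vector $\tilde{\mathbf{p}}$ satisfies the concatenated KKT system \eqref{KKT}; by the uniqueness of the generalized Nash equilibrium guaranteed under the Assumptions (the contraction/uniqueness already invoked for Lemma~1), $\tilde{\mathbf{p}}=\mathbf{p}^\ast$.

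The step I expect to be the crux is justifying that dropping the clairvoyant firms' price regressors -- the ``partial flaw'' -- does not corrupt the identified local slopes of firm $i$, and hence does not destroy the fixed-point property at the limit. The argument hinges entirely on the clairvoyant firms holding their prices fixed within each stage, so that their effect is a constant absorbed into the intercept rather than a confounder of the estimated slopes; this is exactly the structural feature that distinguishes the modified DDEP from the fully flawed model of \cite{cooper2015learning}, where the omitted competitor prices keep moving and demand consistency fails. A secondary, more routine obstacle is interchanging the limit with the solution map of the joint system \eqref{eq:i}--\eqref{eq:k}: one must check that this system admits a unique solution depending continuously on the estimated coefficients $(\hat{\alpha}^i_{n+1},\hat{\beta}^{ij}_{n+1})$ and on the frozen clairvoyant prices, and that the interior, positive-demand hypothesis survives in the limit so that the complementarity terms genuinely drop out; both follow from Assumptions~1--2 but should be stated explicitly.
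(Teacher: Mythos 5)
Your proposal is correct and follows essentially the same route as the paper's own proof: derive the noiseless closed-form estimates with the cross-sums truncated to $j\le N^\prime$, observe that the fixed-point equation $\hat{p}_{n+1}^i=\bigl(\hat{\alpha}_{n+1}^i+\sum_{j\neq i}^{N^\prime}\hat{\beta}_{n+1}^{ij}\hat{p}_{n+1}^j\bigr)/(2\hat{\beta}_{n+1}^{ii})$ is unchanged, and let the cancellation from Lemma~1 deliver the first-order condition. In fact you supply two details the paper leaves implicit -- the justification that the clairvoyant prices, being frozen within each stage, are absorbed into the intercept without biasing the estimated slopes, and the explicit limit argument for the clairvoyant firms' block \eqref{eq:k} -- so your write-up is, if anything, more complete than the paper's.
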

Lemma~\ref{lemma3}, which is proved in EC.7,
implies that although the impact of the clairvoyant firms' prices are omitted in the  linear approximation function $\alpha^i-\beta^{ii}p^i+\sum\limits_{j=1,j\neq i}^{F^\prime} \beta^{ij} p^j$, it is sufficient for our learning. 
For firm~$i$ without knowing the underlying demand, totally unaware of the clairvoyant firm $k$'s price, the reaction from firm $k$ is included in the parameter $\alpha^i$. In addition, the clairvoyant firm $k$ is still very sensitive to price changes by its competitors. From the best response function's perspective, $\hat{p}^i_{n+1}$ is set by \[\arg \max p^i\left(\hat{\alpha}^i_{n+1}-\hat{\beta}^{ii}_{n+1}p^i+\sum\limits_{j,j\neq i}^{F^\prime} \hat{\beta}^{ij}_{n+1}\hat{p}_{n+1}^j\right).\] 
Compared with the previous arguments in Section~\ref{analysis}, the main difference is the parameter $\hat{\alpha}^i_{n+1}$ in both models. Hence, during the estimation cycle, an obvious result can be used to describe the difference between them, namely $\sum\limits_{k=F^\prime+1}^{F}\nabla_{p^k}\lambda^i(\mathbf{p})p^k$. 
\begin{corollary}
Under the assumptions,  supposing $F^\prime$ firms do not know the underlying demand function, the prices~$\mathbf{p}_t$ at period~$t$ converges to $\mathbf{p}^\ast$ at a rate of $\mathcal{O}(F^{\prime 2}T^{-1/2})$.
\end{corollary}
Note that the NE $\mathbf{p}^\ast$ is the solution that satisfies the KKT system (\ref{eq:i}) and (\ref{eq:k}). We turn our attention to regret analysis. To demonstrate the influence of the firms with knowledge of the underlying demand functions, we show the following result whose proof is in EC.8.

\begin{theorem}{}{}
Under the previously stated assumptions, supposing that $F^\prime$ firms do not know the underlying demand function, the vector sequence~$\left\lbrace \mathbf{p}_t:t\geq 1 \right\rbrace$ (compared with the clairvoyant Nash equilibrium~$\mathbf{p}^*$) 
satisfies
$$\displaystyle{\mathbb{E} \left[\sum\limits_{t=1}^T\Big[\vert {r^i(p^{i\ast},p^{-i\ast})}-r^i(p^i_t,p^{-i}_t)\vert\Big] \right]\leq F^{\prime 2} K_8 T^{\frac{3}{4}}},\quad  \forall i=1,\cdots,F^\prime,$$
for some positive constant $K_8$, $T \geq 2$ and $F \geq 2,1\leq F^\prime \leq F$. \label{theorem4}
\end{theorem}
The cumulative difference over $T$ periods between the revenues ${r^i(p^{i\ast},p^{-i\ast})}$ and $r^i(p_t^i,p^{-i}_t)$ is at most $F^{\prime2} K_8  T^{\frac{3}{4}}$. Comparing this to the results contained in Theorem~\ref{theorem2}, one can see an apparent difference: $F^2$ changes into $F^{\prime2}$, which implies that the revenue difference is dependent on the number of the firms without knowledge of the demand curves while the upper bound on this difference is applicable to any firm with or without the knowledge of the underlying demand function. It stands to reason that if a firm does not know the demand function, it must make efforts to learn the demand function so the equilibrium will gradually converge to the clairvoyant NE $\mathbf{p}^\ast$. If any firm does not reach its price at the NE $\mathbf{p}^\ast$, there is no way for all firms to stabilize pricing. Similar to the results in Theorem~\ref{theorem3}, we derive a further theorem proved in EC.9.

\begin{theorem}{}{}
Under the previously stated assumptions, the sequence~$\{p^i_t: t\geq 1\}$ (compared with the sequence~$ \left\lbrace p_t^{i\ast} :t\geq 1 \right\rbrace$) for each $i$ of the firms that do not know the demand curves satisfies
$$\displaystyle{\mathbb{E} \left[\sum\limits_{t=1}^T\Big[r^i(p^{i\ast}_t,p^{-i}_t)-r^i(p^i_t,p^{-i}_t)\Big] \right]\leq F^\prime K_9   T^{\frac{1}{2}}},\quad \forall i=1,\cdots,F^\prime,$$
for some positive constant $K_9$, $T \geq 2$, $F \geq 2$, and $1\leq F^\prime \leq F$.\\
the sequence~$\{p^i_t: t\geq 1\}$ (compared with the sequence~$ \left\lbrace p_t^{i\ast} :t\geq 1 \right\rbrace$) for each $i$ of the firms who know the demand curves satisfies
$$\displaystyle{\mathbb{E} \left[\sum\limits_{t=1}^T\Big[r^i(p^{i\ast}_t,p^{-i}_t)-r^i(p^i_t,p^{-i}_t)\Big] \right]\leq  K_{10}   T^{\frac{1}{2}},\quad \forall i=F^\prime+1,\cdots,F,}$$
for some positive constant $K_{10}$, $T \geq 2$, $F \geq 2$, and $1\leq F^\prime \leq F$. \label{theorem5}
\end{theorem}
It should be emphasized that the two kinds of firms have different regret bounds as shown in Theorem~\ref{theorem5}. The price decisions of the clairvoyant firms always follows the best responses according to the true demand curves, and there is no estimation error to affect the pricing decisions. However, their pricing decisions cannot affect firms without knowledge of the demand curves. 
Theorem~\ref{theorem5} shows that the upper bound on the regret for firms without knowledge of the demand curves is still associated with the number of those firms while the upper bound on the regret for the clairvoyant firms is not.

\section{Numerical Experiments}\label{numericalexperiments}

We examine the performance of the CDL algorithm and present the numerical results related to the regret. We consider two demand curve environments for $\lambda^i(\mathbf{p})$:\\
1): Linear models: for each firm $i$, $\lambda^i(\mathbf{p})=\alpha^i-\beta^{ii}p^i+\sum_{j\neq i}^F p^j$, $\alpha^i\in [\underline{\alpha}^i,\overline{\alpha}^i],\beta^{ii} \in [\underline{\beta}^{ii},\overline{\beta}^{ii}], \mbox{ and } \beta^{ij} \in [\underline{\beta}^{ij},\overline{\beta}^{ij}]$, where $[\underline{\alpha}^i,\overline{\alpha}^i]= [3,5]$, $[\underline{\beta}^{ii},\overline{\beta}^{ii}]= [0.8,0.9]$, $[\underline{\beta}^{ij},\overline{\beta}^{ij}]= [0.6,0.7]$ when $F=2$, $[\underline{\beta}^{ij},\overline{\beta}^{ij}]= [0.3,0.35]$ when $F=3$, $[\underline{\beta}^{ij},\overline{\beta}^{ij}]= [0.2,0.23]$ when $F=4$, and $[\underline{\beta}^{ij},\overline{\beta}^{ij}]= [0.15,0.175]$ when $F=5$,\\
2): Multinomial logit models: for each firm $i$, $\displaystyle{\lambda^i(\mathbf{p})=\frac{\exp^{\alpha^i-\beta^ip^i}}{1+\sum\limits_{i=1}^F \exp^{\alpha^i-\beta^ip^i}}, \alpha^i\in [\underline{\alpha}^i,\overline{\alpha}^i], \mbox{ and } \beta^i \in [\underline{\beta}^i,\overline{\beta}^i],}$ where $[\underline{\alpha}^i,\overline{\alpha}^i]= [3,4]$ and $[\underline{\beta}^i,\overline{\beta}^i]= [0.4,0.5]$,\\
 The random error of $\varepsilon_t^i$ is assumed to follow a normal distribution with mean 0 and variance $\sigma^2$. Owing to the properties of the multinomial logit model, each firm's demand does not exceed 1. Moreover, as the number of firms increases, setting the standard deviation at a fixed value leads to an unrealistic experimental setting. Hence, for the multinomial logit model, we set the standard deviation as the ratio ($\kappa$) of the average demand ($\bar{\lambda}_F$) of $F$ firms, i.e., $\sigma=\kappa\cdot \bar{\lambda}_F$. This varies with the number of firms participating in the price experimentation. But for the linear models, the standard deviation does not change with the number of firms.
 
For each specification of the above settings, we randomly draw 100 instances for the parameters and a sample path of demand shock for each firm. $\alpha^i$, $\beta^{ii}$ and $\beta^{ij}$ are drawn according to a uniform distribution on $[\underline{\alpha}^i,\overline{\alpha}^i]$ , $[\underline{\beta}^{ii},\overline{\beta}^{ii}]$ and  $[\underline{\beta}^{ij},\overline{\beta}^{ij}]$.
In all experiments, we set an initial batch size $I_0=1$ and price interval $\mathcal{P}^i=[p^{i,l},p^{i,h}]=[0,6]$. The initial price $\hat{p}_1^i$ is randomly selected from $\mathcal{P}^i$ and the value of $v$ is set at 2, for each firm $i=1,\cdots,F$.
To present the result of the convergence of regret, for each period $T$ we compute the fraction of revenue loss as follows:
$$\frac{\mathcal{R}^i(\pi^i,T)}{\sum_{t=1}^T p_t^{i\ast}\mathbb{E}[D_t^i(p_t^{i\ast},p_t^{-i})]}.$$
Meanwhile, we compute the fraction of revenue difference defined by:
$$\frac{\mathcal{D}^i(\pi^i,T)}{\sum_{t=1}^T p_t^{i\ast}\mathbb{E}[D_t^i(\mathbf{p}^\ast)]}.$$

 Figures~2(a) and 2(b) show respectively the averaged fraction of revenue loss and averaged fraction of revenue difference of a firm who has no knowledge of the underlying demand when the underlying demand curve follows a linear model with $\sigma$ set to $0.16$. Figures~2(c) and 2(d) show the corresponding results of a multinomial logit model with $\kappa$ set to $5\%$. The sampled standard deviations  for the averaged fractions of revenue loss and revenue difference are below $0.05\%$ over 100 repeated runs.
Several important observations can be made: as time goes on, the averaged fractions of revenue loss and revenue difference of each firm converge to zero; as the number of firms increases, the averaged fractions of revenue loss and revenue difference of each firm increase significantly. Table~\ref{tablea} reports the averaged fraction of revenue loss and averaged fraction of revenue difference of a firm that does not know the underlying demand under different standard deviations {when the underlying demand is linear, multinomial logit, exponential, and semi-log, respectively \footnote{ For the exponential models, $\alpha^i$, $\beta^{ii}$ and $\beta^{ij}$ are uniformly drawn from $[0.8,1.2]$, $[0.25,0.5]$ and $[0.1,0.2]$. For the semi-log models, $\alpha^i$, $\beta^{ii}$ and $\beta^{ij}$ are uniformly drawn from $[0.8,1]$, $[0.6,0.8]$ and $[0.2,0.4]$.}.}
 We observe that as the standard deviation of demand shocks increases, the averaged fractions of revenue loss and revenue difference become larger. Furthermore, when the underlying demand curve follows a linear model and the firm's estimation model is well specified, the numerical results are better than those for a multinomial logit model, an exponential model, and a semi-log model.

\begin{figure}[!ht]
		\centering
  \includegraphics[scale=1.8]{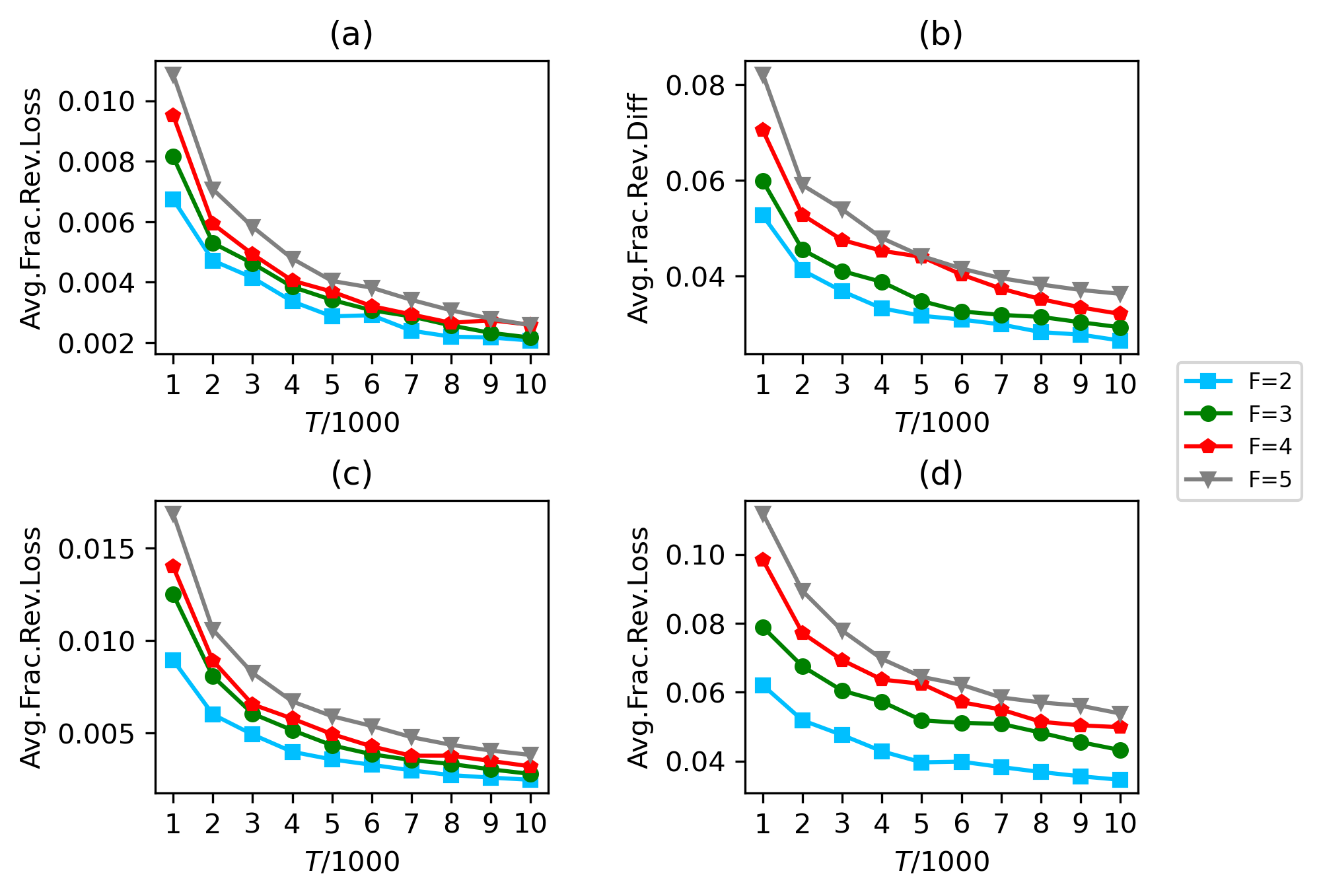}
  \label{figure2}
 	\caption{The averaged fraction of revenue loss and averaged fraction of revenue difference of a firm who has no knowledge of the underlying demand for different numbers $F$.	(The upper panels are the results of linear underlying model and the lower panels are the results of multinomial logit underlying model.)}
\end{figure}
\begin{table}[!ht]
\caption{The averaged fraction of revenue loss and averaged fraction of revenue difference of a firm who has no knowledge of the underlying demand for different standard deviations $\sigma$. ($F=2$).\label{tablea}}
\centering
\begin{threeparttable}
\begin{tabular}{ccccccccccc}
 \hline\toprule&&&& Periods $T$ &&&&& \\\hline
       Avg. Frac. Revenue Loss (Linear) & & 2000& &  4000  & &6000 & &8000 &&10000\\\hline
      $\sigma=0.16 $& &  0.0051	
&  & 0.0033 & & 0.0028 && 0.0025 && 0.0021\\
$\sigma=0.32 $ & &  0.0139	
&  & 0.0092 & & 0.0075 && 0.0062 && 0.0058 \\
$\sigma=0.48 $ & &  0.0315	
&  & 0.0228 & & 0.0195 && 0.0163 && 0.0145 \\\hline
Avg. Frac. Revenue Loss (MNL) & & 2000& &  4000  & &6000 & &8000 &&10000\\\hline
      $\kappa=5\% $& &  0.0062	
&  & 0.0041 & & 0.0034 && 0.0028 && 0.0024\\
$\kappa=10\%  $ & &  0.0166	
&  & 0.0115 & & 0.0092 && 0.0079 && 0.0072 \\
$\kappa=15\%  $ & &  0.0332	
&  & 0.0254 & & 0.0207 && 0.0183 && 0.0169 \\\hline
Avg. Frac. Revenue Loss (Exp) & & 2000& &  4000  & &6000 & &8000 &&10000\\\hline
      $\kappa=5\% $& &  0.0101	
&  & 0.0074 & & 0.0061 && 0.0049 && 0.0044\\
$\kappa=10\%  $ & &  0.0199	
&  & 0.0135 & & 0.0113 && 0.0093 && 0.0082 \\
$\kappa=15\%  $ & &  0.0355	
&  & 0.0272 & & 0.0205 && 0.0173 && 0.0152 \\\hline
Avg. Frac. Revenue Loss (Semi-log) & & 2000& &  4000  & &6000 & &8000 &&10000\\\hline
      $\kappa=5\% $& &  0.0103	
&  & 0.0076 & & 0.0068 && 0.0056 && 0.0045\\
$\kappa=10\%  $ & &  0.0133	
&  & 0.0095 & & 0.0085 && 0.0070 && 0.0065 \\
$\kappa=15\%  $ & &  0.0197	
&  & 0.0142 & & 0.0121 && 0.0105 && 0.0094 \\\hline
 Avg. Frac. Revenue Difference (Linear) & & 2000& &  4000  & &6000 & &8000 &&10000\\\hline
$\sigma=0.16 $& &  0.0432	
&  & 0.0372 & & 0.0338 && 0.0301 && 0.0281\\
$\sigma=0.32 $ & &  0.0637	
&  & 0.0503 & & 0.0449 && 0.0402 && 0.0382 \\
$\sigma=0.48 $ & &  0.1038	
&  & 0.0820 & & 0.0713 && 0.0625 && 0.0562 \\\hline
 Avg. Frac. Revenue Difference (MNL) & & 2000& &  4000  & &6000 & &8000 &&10000\\\hline
       $\kappa=5\% $& &  0.0543
&  & 0.0443 & & 0.0410 && 0.0369 && 0.0352\\
$\kappa=10\%  $ & &  0.0835	
&  & 0.0657 & & 0.0562 && 0.0518 && 0.0491\\
$\kappa=15\%  $ & &  0.1112
&  & 0.0909 & & 0.0794 && 0.0729 && 0.0665\\\hline
Avg. Frac. Revenue Difference (Exp) & & 2000& &  4000  & &6000 & &8000 &&10000\\\hline
       $\kappa=5\% $& &  0.0437
&  & 0.0357 & & 0.0321 && 0.0292 && 0.0270\\
$\kappa=10\%  $ & &  0.0636
&  & 0.0518 & & 0.0468 && 0.0417 && 0.0394\\
$\kappa=15\%  $ & &  0.0808
&  & 0.0649 & & 0.0591 && 0.0525 && 0.0598\\\hline
Avg. Frac. Revenue Difference (Semi-log) & & 2000& &  4000  & &6000 & &8000 &&10000\\\hline
       $\kappa=5\% $& &  0.0787
&  & 0.0642 & & 0.0550 && 0.0505 && 0.0483\\
$\kappa=10\%  $ & &  0.0836
&  & 0.0696 & & 0.0643 && 0.0546 && 0.0515\\
$\kappa=15\%  $ & &  0.1013
&  & 0.0847 & & 0.0781 && 0.0702 && 0.0687\\\hline
\end{tabular}
\end{threeparttable}
\end{table}

We perform additional analysis to validate our theorems. We first evaluate the numerical results of the averaged fraction of revenue loss and averaged fraction of revenue difference with respect to the time horizon to confirm the theoretical results. According to Theorem~\ref{theorem2}, $\mathcal{D}^i(\pi^i,T) \leq K_6F^2T^{3/4}$ implies that $\mathcal{D}^i(\pi^i,T)/T \leq K_6F^2T^{-1/4}$; and by Theorem~\ref{theorem3}, $\mathcal{R}^i(\pi^i,T) \leq K_7FT^{1/2}$ implies that $\mathcal{R}^i(\pi^i,T)/T \leq K_7FT^{-1/2}$.  This implies that the log of the average fraction of revenue loss should be a linear function of $\log T$ with a slope of approximately $-1/2$, and the log of the average fraction of revenue difference should fit a linear function of $\log T$ with a slope of approximately $-1/4$. When the underlying demand curve is a linear model, Figure 3(a) shows that the log of the averaged fraction of revenue loss of a firm that has no knowledge of the underlying demand asymptotically approaches a line with a slope of $-0.5$ when plotted against $\log T$ (base $e$). Figure 3(b) presents that the logarithm of the averaged fraction of revenue difference of a firm who has no knowledge of the underlying demand fits a line with a slope of $-0.25$. When the underlying demand curve is multinomial logit model, Figures 3(c) and 3(d) show that the log of the averaged fractions of revenue loss and revenue loss fit a line with slopes of $-0.5$ and $-0.25$, respectively. We can also approximate the averaged fractions of revenue loss and revenue difference as functions of $F$. 
Table~\ref{table1} reports the averaged fractions of revenue loss and revenue difference of firm 1 when $F=2,3,4$, and $5$. For example, if we select the results of $T=2000$, the averaged fraction of revenue loss of firm 1 can be approximated by $0.005F+0.0043$, and the averaged fraction of revenue difference can be approximated by $0.0023F^2+0.03$. Therefore, these results numerically confirm Theorem~\ref{theorem2} and Theorem~\ref{theorem3}.

\begin{figure}[!ht]
		\centering
 \includegraphics[scale=1.8]{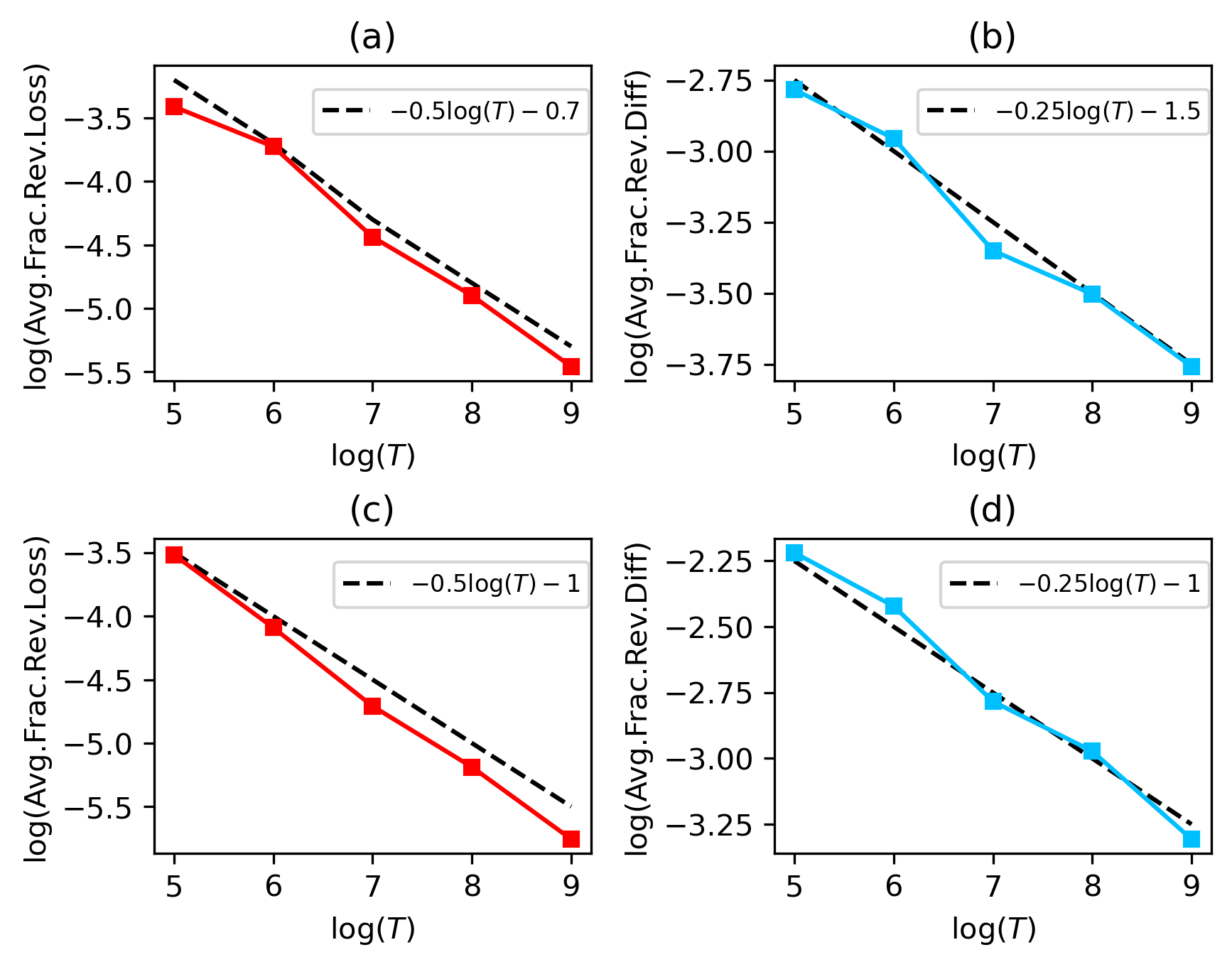}		\label{figure3}
 	\caption{The log-log plots of averaged fractions of revenue loss and revenue difference of a firm who has no knowledge of the underlying demand as a function of $\log (T)$ when $F=2$ and the underlying demand curves are linear models (upper panels) or multinomial logit models (lower panels).}
\end{figure}

\begin{table}[!ht]
\caption{The averaged fractions of revenue loss and revenue difference of a firm who has no knowledge of the underlying demand in terms of different number $F$ (Linear model).\label{table1}}
\centering
\begin{threeparttable}
\begin{tabular}{ccccccccccc}
 \hline\toprule&&&& Periods $T$ &&&&& \\\hline
       Avg. Frac. Revenue Loss & & 2000& &  4000  & &6000 & &8000 &&10000\\\hline
       F=2& &  0.0051	
&  & 0.0033 & & 0.0028 && 0.0025 && 0.0021\\
F=3 & &  0.0057	
&  & 0.0038 & & 0.0031 && 0.0026 && 0.0022 \\
F=4 & &  0.0062	
&  & 0.0043 & & 0.0034 && 0.0028 && 0.0024 \\
F=5 & &  0.0069
&  & 0.0046 & & 0.0036 && 0.0029 && 0.0025 \\\hline
 Avg. Frac. Revenue Difference & & 2000& &  4000  & &6000 & &8000 &&10000\\\hline
F=2& &  0.0432	
&  & 0.0372 & & 0.0338 && 0.0301 && 0.0281\\
F=3 & &  0.0462
&  & 0.0382 & & 0.0349 && 0.0321 && 0.0295 \\
F=4 & &  0.0586
&  & 0.0456 & & 0.0401 && 0.0360 && 0.0329 \\
F=5 & &  0.0853
&  & 0.0603 & & 0.0484 && 0.0431&& 0.0390 \\\hline
\end{tabular}
\end{threeparttable}
\end{table}

For the partially-clairvoyant model, we test the performance of the modified CDL algorithm with different numbers of firms that possesses knowledge of the underlying demand ($F^\prime=2,3$, and $4$). We consider the underlying demand curves to be multinomial logit models with the same settings as the CDL algorithm and set $\kappa=5\%$. Figure 4 shows the results of the averaged fractions of revenue loss and revenue difference. we can see that the averaged fractions of revenue loss and revenue difference are decreasing over time. Figures 4(a) and 4(b) report the results of the firms who have no knowledge of underlying demand and show that the averaged fractions of revenue loss and revenue difference of these firms increase as $F^\prime$ increases. Figures 4(c) and 4(d) report the results of the clairvoyant firms and show that the averaged fraction of revenue difference of the clairvoyant firms increases as $F^\prime$ increases, but the averaged fraction of revenue loss does not. 
\begin{figure}[!ht]
		\centering
 \includegraphics[scale=1.8]{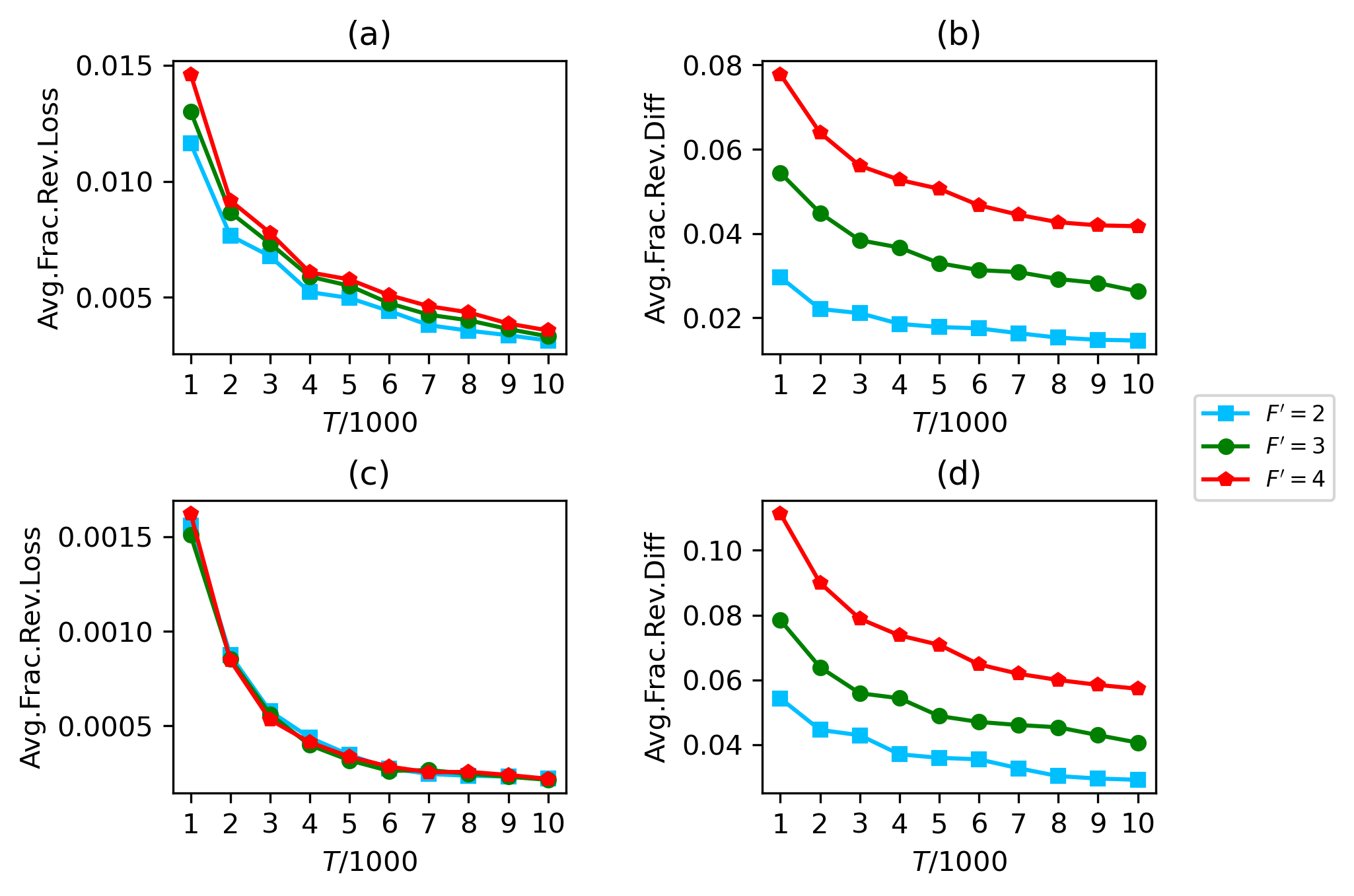}		
 	\caption{The averaged fractions of revenue loss and revenue difference of a firm who has no knowledge of the underlying demand curve (upper panels) and a firm who knows the underlying demand curve (lower panels). The total number of firms $F=5$ and the underlying demand curves are multinomial logit models.}
\end{figure}

We also verify the results of revenue loss and revenue difference with the theoretical results in Theorem~\ref{theorem4} and Theorem~\ref{theorem5}.  Figures 5(a) and 5(b) show that the log of the averaged fraction of revenue loss of a firm who has no knowledge of the underlying demand curve has a linear relationship with respect to $\log T$ with a slope close to $-0.5$ and the log of the averaged fraction of revenue difference of this firm fits a line with a slope of $-0.25$. For the results of a clairvoyant firm, Figure 5(c) shows that if the log of its averaged fraction of revenue loss fits a line, its slope is approximately $-0.7$, which is lower than $-0.5$. This is because a clairvoyant firm does not participate in the price experiments, while the non-clairvoyant firms do. Meanwhile, a clairvoyant firm's revenue will be increased since the non-clairvoyant firms raise the prices for the price experiments. Figure 5(d) presents that the log of a clairvoyant firm's averaged fraction of revenue loss fits a line with a slope of $-0.25$. Additionally, we find that the averaged fraction of revenue loss of a firm that has no knowledge of the underlying demand can be described as a linear function with respect to $F^\prime$. For example, if we select the results of averaged fraction of revenue loss when $T=2000$ in Table \ref{table2}, the corresponding linear function is $0.004F^\prime+0.0075$. Similarly, if we use a quadratic function with respect to $F^\prime$ to approximate the averaged fraction of revenue difference, based on the results of averaged fraction of revenue difference when $T=2000$ in Table \ref{table2}, the quadratic function is $0.003F^{\prime2}+0.015$. Table \ref{table3} shows the results of a clairvoyant firm, and we observe that the averaged fraction of revenue loss of the clairvoyant firm does not depend on $F^\prime$, but the results of averaged fraction of revenue difference in Table \ref{table3} can be modeled as $0.003F^{\prime2}+0.03$ when $T=2000$. Thus, these numerical results validate our Theorem~\ref{theorem4} and Theorem~\ref{theorem5}.

\begin{figure}[!ht]
		\centering
 \includegraphics[scale=1.8]{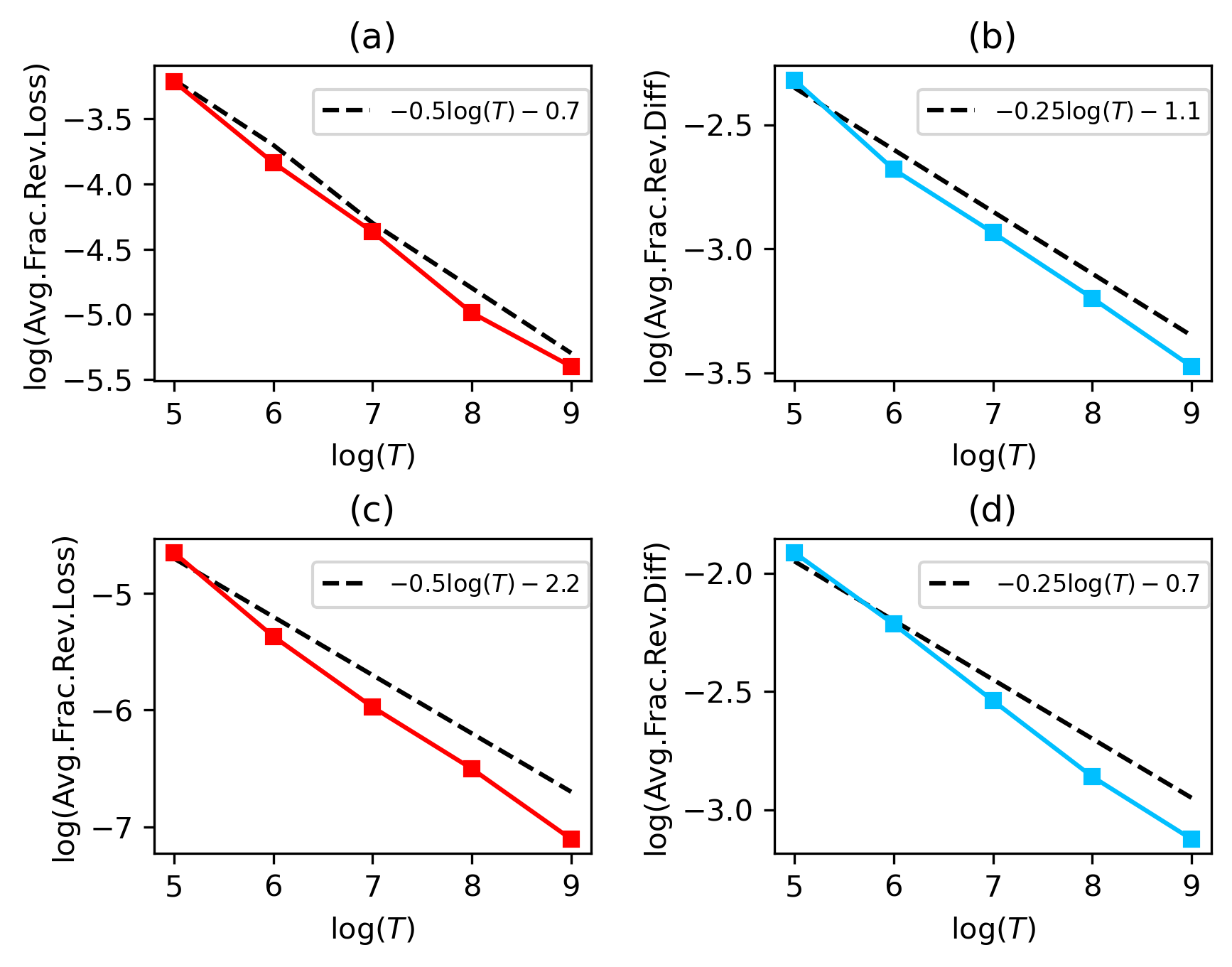}		
 	\caption{The log-log plots of averaged fractions of revenue loss and revenue difference of the firms who have no knowledge of the underlying demand as a function of $\log (T)$ (upper panels) and those of the clairvoyant firms as a function of $\log (T)$ (lower panels). $F^\prime=3, F=5$, and the underlying demand curves are multinomial logit models. }
\end{figure}
\begin{table}[!ht]
\caption{The averaged fractions of revenue loss and revenue difference of a firm who has no knowledge of the underlying demand in terms of different numbers $F^\prime$ (multinomial logit model).\label{table2}}
\centering
\begin{threeparttable}
\begin{tabular}{ccccccccccc}
\hline\toprule&&&& Periods $T$ &&&&& \\\hline
       Avg. Frac. Revenue Loss & & 2000& &  4000  & &6000 & &8000 &&10000\\\hline
       $F^\prime$=2& &  0.0082	
&  & 0.0055 & & 0.0047 && 0.0037 && 0.0032\\
$F^\prime$=3 & &  0.0086	
&  & 0.0057 & & 0.0049 && 0.0039 && 0.0033 \\
$F^\prime$=4 & &  0.0091	
&  & 0.0060 & & 0.0049 && 0.0040 && 0.0035 \\\hline
 Avg. Frac. Revenue Difference & & 2000& &  4000  & &6000 & &8000 &&10000\\\hline
$F^\prime$=2& &  0.0251	
&  & 0.0195 & & 0.0175 && 0.0153 && 0.0146\\
$F^\prime$=3 & &  0.0428	
&  & 0.0346 & & 0.0303 && 0.0272 && 0.0243 \\
$F^\prime$=4 & &  0.0639	
&  & 0.0527 & & 0.0467 && 0.0426 && 0.0417 \\\hline
\end{tabular}
\end{threeparttable}
\end{table}
\begin{table}[!ht]
\caption{The averaged fractions of revenue loss and revenue difference of a clairvoyant firm in terms of different numbers $F^\prime$ (multinomial logit model).\label{table3}}
\centering
\begin{threeparttable}
\begin{tabular}{ccccccccccc}
 \hline\toprule&&&& Periods $T$ &&&&& \\\hline
       Avg. Frac. Revenue Loss & & 2000& &  4000  & &6000 & &8000 &&10000\\\hline
       $F^\prime$=2& &  0.0008	
&  & 0.0004 & & 0.0003 && 0.0002 && 0.0001\\
$F^\prime$=3 & &  0.0008	
&  & 0.0004 & & 0.0004 && 0.0002 && 0.0002\\
$F^\prime$=4 & &  0.0008	
&  & 0.0004 & & 0.0003 && 0.0002 && 0.0002\\\hline
 Avg. Frac. Revenue Difference & & 2000& &  4000  & &6000 & &8000 &&10000\\\hline
$F^\prime$=2& &  0.0445
&  & 0.0370 & & 0.0354 && 0.0302 && 0.0291\\
$F^\prime$=3 & &  0.0638	
&  & 0.054 & & 0.0470 && 0.0453 && 0.0406 \\
$F^\prime$=4 & &  0.0899	
&  & 0.0737 & & 0.0647&& 0.0599&& 0.0572 \\\hline
\end{tabular}
\end{threeparttable}
\end{table}

Finally, we consider a case where the firm deviates from the prescribed modified CDL algorithm. More precisely, the clairvoyant firms change the prices every period instead of at the end of each stage.  Table~\ref{table4} shows the averaged fraction of revenue loss and averaged fraction of revenue difference when $F=5$, $F^\prime=3$, and that the clairvoyant firms change prices every period based on the best response function. We observe that changing prices per period can make the clairvoyant firms generate more revenue loss and revenue difference compared to adjusting prices at the end of each stage, i.e., compared to the results in Table~\ref{table3}. We also find that in this case, the results of the non-clairvoyant firms, who do not know the underlying demand curves, have no significant changes compared to the results in Table~\ref{table2}.

\begin{table}[!ht]
\caption{A comparative case where the clairvoyant firms deviate from the modified CDL algorithm when $F=5$ and $F^\prime=3$. The averaged fraction of revenue loss and averaged fraction of revenue difference as the clairvoyant firms change prices every period (multinomial logit model).\label{table4}}
\centering
\begin{threeparttable}
\begin{tabular}{ccccccccccc}
 \hline\toprule&&&& Periods $T$ &&&&& \\\hline
       Avg. Frac. Revenue Loss & & 2000& &  4000  & &6000 & &8000 &&10000\\\hline
       clairvoyant firms & &  0.0030	
&  & 0.0028 & & 0.0024 && 0.0023 && 0.0021\\
non-clairvoyant firms & &  0.0082	
&  & 0.0055 & & 0.0041 && 0.0036 && 0.0031\\\hline
Avg. Frac. Revenue Difference & & 2000& &  4000  & &6000 & &8000 &&10000\\\hline
clairvoyant firms & &  0.0066	
&  & 0.0053 & & 0.0045 && 0.0043 && 0.0039\\
non-clairvoyant firms & &  0.0432	
&& 0.0353&  & 0.0306 & & 0.0293 && 0.0265\\\hline
\end{tabular}
\end{threeparttable}
\end{table}

\section{Conclusions}
In this paper, we have studied a non-cooperative game among the firms each seeking to maximize revenue. The underlying demand-price information is not known \textit{a priori} for each firm and the demand response is determined by consumers and competitors. Based on past observations, estimated demand models are used to make pricing decisions. The equilibrium pricing algorithm is developed for such a system and is combined with estimation and optimization cycles. The estimation cycle uses a linear regression model to estimate the linear approximation of the underlying demand model at a previously given price. We show that this converges to the equilibrium pricing if the algorithm is designed properly. 
The role of the algorithm is to guarantee the convergence of pricing decisions. {We also conduct an analysis of the regret. Regret is defined, in our case, as a comparison with the optimal price that the firm could make when clairvoyant information is available. 
And we show that each firm's regret divided by $T$ converges to $0$ at the rate of $\mathcal{O}(T^{-1/2})$.

We use a linear approximation to estimate the unknown demand function. We analyze the linear demand model in two general scenarios: one being when all competing firms have unknown demand functions, the other being when a subset of demand functions is unknown. In the first scenario, the linear demand model captures the impact of all competitors. In the second scenario (partially clairvoyant model), only the impact of competitors with unknown demand functions is captured. We show that the linear demand model in both scenarios ensures that pricing decisions converge to the true equilibrium pricing as a result of the algorithm operation cycle. The analysis and numerical results show that a regret upper bound of firms without knowledge of the demand curve is associated with the number of such firms, while a regret upper bound of the firms with knowledge of the demand curve is not. We have not found literature establishing regret lower bounds for such problems, and due to the specificity of our definition of regret, some classical methods may not be suitable for deriving regret lower bounds, so we leave the task of establishing theoretical regret lower bounds for future research. 


A potential follow-up topic would be to consider the effect of product differentiation. Recently some research has focused on dynamic pricing problems where the underlying demand curve involves product features, this scenario where the competing firms sell differentiated products is left to future work. Another viable direction for future research would be to combine the price competition with the inventory problem. Some works have also studied dynamic inventory and pricing models in a competitive environment, such as \cite{bernstein2004dynamic}. 
Another exciting area for future work would be the development of an algorithm that considers other types of functions as the approximation functions.

\ACKNOWLEDGMENT{%
This research is supported by Ministry of Science and Technology of Taiwan under grants MOST 107-2221-E-007-074-MY3, MOST 108-2221-E-009-053-MY2, MOST 110-2221-E-007-108-MY3, and MOST 111-2410-H-A49-022-MY2. We would like to thank Yue Dai (Fudan University), Chiao-Wei Li (NTHU), Wen-Ying Huang (NTHU), and Ling-Wei Wang for useful discussions.
}

\bibliographystyle{apalike}  
\bibliography{POM/ref1}

\end{document}